\documentclass[11pt]{article}

%
%




\usepackage[T1]{fontenc}
\usepackage[utf8]{inputenc}
\usepackage{csquotes}

\usepackage[margin=1in]{geometry}
\usepackage{ifthen}
\usepackage{graphicx,color,xcolor}
\usepackage[authoryear]{natbib}
\definecolor{cornellred}{rgb}{0.7, 0.11, 0.11}
\definecolor{dgreen}{rgb}{0.0, 0.5, 0.0}
\definecolor{ballblue}{rgb}{0.13, 0.67, 0.8}
\definecolor{royalblue(web)}{rgb}{0.25, 0.41, 0.88}
\definecolor{bleudefrance}{rgb}{0.19, 0.55, 0.91}
\definecolor{royalazure}{rgb}{0.0, 0.22, 0.66}
\usepackage{url}
\usepackage{hyperref}
\hypersetup{
    colorlinks = true,
    linkcolor=cornellred,
    citecolor=royalazure,
    linkbordercolor = white,
    urlcolor  = cornellred
}

\usepackage{fancybox}

\usepackage{microtype}
\usepackage{graphicx}
\usepackage{subfigure}
\usepackage{booktabs} 
\usepackage{hyperref}
\usepackage{xspace}
\usepackage{xcolor}
\usepackage{amsmath,amssymb,amsthm}
\usepackage{algorithm,algorithmic}
\usepackage{pgf,tikz}
\usepackage{tkz-euclide}
\usetkzobj{all}
\usetikzlibrary{arrows}
\usetikzlibrary{chains,arrows}
\usepackage{dsfont}
\usepackage{todonotes}
\usepackage{mathtools}
\usepackage{color-edits}
\addauthor{rn}{red}     
\addauthor{vc}{blue}    
\addauthor{ms}{green}   
\theoremstyle{plain}
\usepackage{romannum}
\usepackage{cleveref}
\newtheorem{theorem}{Theorem}[section]
\newtheorem{lemma}[theorem]{Lemma}

\newtheorem{corollary}[theorem]{Corollary}

\newtheorem{definition}{Definition}[section]
\newtheorem{remark}[theorem]{Remark}
\usepackage{bbm}

\newcommand{\ie}{{i.e.,\ }}
\newcommand{\eg}{{e.g.,\ }}

\newcommand{\eps}{\epsilon}




\definecolor{ttttff}{rgb}{0.2,0.2,1.}
\definecolor{ttqqqq}{rgb}{0.2,0.,0.}
\definecolor{ududff}{rgb}{0.30196078431372547,0.30196078431372547,1.}
\definecolor{ccqqqq}{rgb}{0.8,0.,0.}

\newcommand{\dfun}{g}
\newcommand{\vfun}{f}
\newcommand{\mbench}{\texttt{MAX-upper}}

\newcommand{\Dmbench}{\texttt{1D-MAX-upper}}
\newcommand{\Dsbench}{\texttt{1D-SUM-upper}}
\newcommand{\ind}[1]{\mathbbm{1}\left[ #1 \right]}
\newcommand{\Ev}{\mathcal{E}}

\begin{document}
\pagenumbering{arabic}
\onecolumn



\title{Hierarchical Clustering for Euclidean Data}
\author{Moses Charikar\thanks{Stanford University.}\\{\tt moses@cs.stanford.edu}\and Vaggos Chatziafratis\footnotemark[1]\\{\tt vaggos@cs.stanford.edu} \and Rad Niazadeh\footnotemark[1]\\{\tt rad@cs.stanford.edu} \and Grigory Yaroslavtsev\thanks{Indiana University, Bloomington. Research supported by NSF award CCF-1657477 and Facebook Faculty Research Award.}\\{\tt grigory@grigory.us}}
\date{\today}
\maketitle

\begin{abstract}
Recent works on Hierarchical Clustering (HC), a well-studied problem in exploratory data analysis, have focused on optimizing various objective functions for this problem under \emph{arbitrary} similarity measures. In this paper we take the first step and give novel scalable algorithms for this problem tailored to Euclidean data in $\mathbb{R}^d$ and under vector-based similarity measures, a prevalent model in several typical machine learning applications. We focus primarily on the popular Gaussian kernel and other related measures, presenting our results through the lens of the objective introduced recently by~\cite{joshNIPS}. We show that the approximation factor in \cite{joshNIPS} can be improved for Euclidean data. We further demonstrate both theoretically and experimentally that our algorithms scale to very high dimension $d$, while outperforming average-linkage and showing competitive results against other less scalable approaches.
\end{abstract}

\section{Introduction}\label{sec:intro}
\newcommand{\Dfun}{\mathcal{F}^{-}}
\newcommand{\Jfun}{\mathcal{F}^{+}}




Hierarchical clustering is a popular data analysis method, with various applications in data mining~\citep{berkhin2006survey}, phylogeny~\citep{eisen1998cluster}, and even finance~\citep{tumminello2010correlation}. In practice, simple agglomerative procedures like average-linkage, single-linkage and complete-linkage are often used for this task
(See the book by \citet*{MRS08} for a comprehensive discussion). While the study of hierarchical clustering has focused on algorithms, there was a lack of objective functions to measure the quality of the output and compare the performance of algorithms.
To remedy this, \cite{dasguptaSTOC} recently introduced and studied an interesting objective function for hierarchical clustering for similarity measures.

The input to the above problem is a set of points $V$ with a similarity measure $w_{ij}$ between $i$ and $j$. 
Given a hierarchical clustering represented by a tree $\mathcal T$ whose leaves correspond to points in $V$, Dasgupta's objective is defined as 
$\Dfun(\mathcal T) = \sum_{(i,j) \in E} w_{ij} |\{x \in T(i,j)\}|,$
where $T(i,j)$ is the subtree of $\mathcal T$ rooted in the least common ancestor of $i$ and $j$ in $\mathcal T$.  \citeauthor{dasguptaSTOC} showed that solutions obtained from minimizing this objective has several desirable properties, which prompted a line of work on objective driven algorithms for hierarchical clustering, resulting in new algorithms as well as shedding light on the performance of classical methods~\citep{royNIPS,vaggosSODA,vincentSODA}.


%
Recently, this 
viewpoint has been applied to understand the performance of average linkage agglomerative clustering, one of the most popular algorithms used in practice. 
\cite{joshNIPS} introduced a new objective, in some sense dual to the objective introduced by Dasgupta:
  \begin{equation*}
\label{eq:obj-2}
\Jfun(\mathcal T) = \sum_{(i,j) \in E} w_{ij} (|V| - |\{x \in T(i,j)\}|),
\end{equation*}
They showed that Average-Linkage obtains a $\frac{1}{3}$-approximation for maximizing this objective function.

It turns out that a random solution 
also achieves a $\frac{1}{3}$-approximation ratio for this problem.
Recently \citet*{charikar2018hierarchical} showed that in the worst case, the approximation ratio achieved by Average-Linkage is no better than $\frac{1}{3}$ for maximizing $\Jfun(\mathcal T)$.
They also gave an SDP based algorithm that achieves a ($\frac{1}{3}+\epsilon$)-approximation for the problem, for small constant $\epsilon$.


One drawback of the prior work on these hierarchical clustering objectives is the fact that they all consider arbitrary similarity scores (specified as an $n \times n$ matrix);
however, there is much more structure to such similarity scores in practice.
In this paper, we initiate the study of the commonly encountered case of \emph{Euclidean data}, where the similarity score $w_{ij}$ is computed by applying a monotone decreasing function to the Euclidean distance between $i$ and $j$. Roughly speaking, we show how to exploit this structure to design improved approximation/faster algorithms for hierarchical clustering, and how to re-analyze algorithms commonly used in practice.
In this paper, we focus on the problem of maximizing $\Jfun(\mathcal T)$, introduced by \cite{joshNIPS}.


Arguably the most common distance-based similarity measure used for Euclidean data is the Gaussian kernel. Here we use the spherical version $w_{ij} \sim \exp(-\|v_i - v_j\|/2\sigma^2)$.  The parameter $\sigma^2$, referred to as the \textit{bandwidth}, plays an important role for applications and a large body of literature exists on selection of this parameter~\citep{zelnik2005self}.


As discussed above, devising simple practical algorithms that improve on the $\frac{1}{3}$-approximation for general similarity measures appears to be a major challenge as observed in~\cite{joshNIPS,charikar2018hierarchical}. 
In this paper, \emph{we show that this approximation can be improved for Euclidean data, through fast algorithms that can be scaled to very large datasets.}
While it might seem that the case of Euclidean data with the Gaussian kernel is a very restricted class of inputs to the HC problem, we show that for suitably high dimensions and suitable small choice of bandwidth $\sigma^2$ it can simulate arbitrary similarity scores (scaled appropriately).
Thus any improvements to approximation guarantees for the Euclidean case (that do not apply to general similar scores) must necessarily involve assumptions about the dimension of the data (not very high) or on $\sigma^2$ (not  pathologically small).
Such assumptions on $\sigma^2$  are consistent with common methods for computing the bandwidth parameter based on data (e.g., \citep{zelnik2005self}). 

\paragraph{Contributions.}
We start with the simplest case of 1-dimensional Euclidean data. 
Even in this seemingly simple setting, obtaining an efficient algorithm that produces an exactly optimum solution seems non-trivial, motivating the study of approximation algorithms.
In Section~\ref{sec:1d} we prove that two algorithms -- {\sc Random Cut (RC)} and Average-Linkage (AL) -- obtain $\frac{1}{2}$-approximation of the optimal solution, i.e., tree maximizing $\Jfun(\mathcal T)$ (AL achieves this deterministically, while RC only in expectation). This beats the best known approximation for the general case, which is $0.336$~\citep{charikar2018hierarchical}. Here RC is substantially faster than AL, with a running time of  $O(n \log n)$ vs. $O(n^2)$.

We next consider the high-dimensional case with the Gaussian Kernel and show that Average-Linkage cannot beat the factor $\frac{1}{3}$ even in poly-logarithmic dimensions. 
We propose the \textsc{Projected Random Cut (PRC)} algorithm that gets a constant improvement over $\frac13$, irrespective of the dimension $d$ (the improvement is a function of the ratio of the diameter to $\sigma^2$, and drops as this ratio gets large). 

Furthermore, PRC runs in $O(n(d+\log n))$ time while Average-Linkage runs in $O(dn^2)$ time. Even single-linkage runs in almost-linear time only for constant $d$ and has exponential dependence on the dimension (see e.g.,~\cite{YV18}) and it is open whether it can be scaled to large $d$ when $n$ is large.

\paragraph{Experiments.} Many existing algorithms have time efficiency shortcomings, and none of them can be used for really large datasets. On the contrary, our \textsc{Projected Random Cut} (see \Cref{sec:hd}) is a fast HC algorithm that scales to the largest ML datasets. The running time scales almost linearly and the algorithm can be implemented in one pass without needing to store similarities, so the memory is $O(n)$. We also evaluate its quality on a small dataset (Zoo).

\paragraph{Further related work.}
HC has been extensively studied across different domains (we refer the reader to~\cite{berkhin2006survey} for a survey). One killer application in biology is in phylogenetics~\cite{sneath1962numerical,jardine1968model} and actually popular algorithms, like Average-Linkage are originating in this field. HC is also widely used to perform community detection in social networks~\citep{leskovec2014mining,mann2008use}, is used in bioinformatics~\citep{diez2015novel} and even in finance~\citep{tumminello2010correlation}. Other important applications include image and text classification~\citep{steinbach2000comparison}.

After a formal optimization HC objective was introduced, many works studied HC from an approximation algorithms perspective. The original top-down Sparsest Cut algorithm by Dasgupta, was later shown that it achieves an  $O(\log n)$ approximation by~\cite{royNIPS}; this was finally improved to $O(\sqrt{\log n})$~by \cite{vaggosSODA,vincentSODA}. Subsequent work on beyond-worst-case analysis~\citep{vincentNIPS} proposes a hierarchical stochastic block model under which an older spectral algorithm of~\cite{mcsherry2001spectral} combined with linkage steps gets a $O(1)$-approximation. Finally, alongside the objective of scaling HC to large graphs, the work of~\cite{bateni2017affinity} presents two MapReduce implementations with a focus on minimum-spanning-tree based clusterings that scales to graphs with trillions of edges.

\paragraph{Open problems.}
Here is a list of open questions that we remain unanswered in this paper:
(1) Can one get an improvement over $\frac{1}{3}$ for the problem of maximizing $\Jfun(\mathcal T)$ as a function of $d$ for small $d$?
(2) Can projection on low-dimensional subspaces be used to improve the approximation ratio for the high-dimensional case further?
(3) Does Average-Linkage achieve a $3/4$-approximation for 1-dimensional data?

\section{Preliminaries}\label{sec:prelim}
\paragraph{Euclidean data.} We consider data sets represented as sets of $d$-dimensional \emph{feature} vectors. Suppose these vectors are $v_1,\ldots,v_n\in \mathbb{R}^d$. We focus on similarity measures between pairs of data points, denoted by $[w_{ij}]_{i,j\in[n]}$, where the similarities only depend on the underlying vectors, \ie $w_{ij}=\vfun(v_i,v_j)$ for some function $\vfun:\mathbb{R}^d\times\mathbb{R}^d\rightarrow [0,1]$ and furthermore are fully determined by monotone functions of distances between them.
\begin{definition}[Distance-based similarity measure]
\label{def:dist-base}
	A similarity measure $w_{ij} = \vfun(v_i, v_j)$ is ``distance-based'' if $\vfun(v_i, v_j) = \dfun(\|v_i - v_j\|_2)$ for some function $\dfun \colon \mathbb R \to [0,1]$, and is ``monotone distance-based'' if furthermore $\dfun \colon \mathbb R \to [0,1]$ is a monotone non-increasing function.
\end{definition}

As an example of the monotone similarity measure it is natural to consider the \emph{Gaussian kernel similarity}, \ie
\begin{equation}
w_{ij} = (\sqrt{2 \pi} \sigma)^{-n} e^{- \frac{\|v_i - v_j\|_2^2}{2\sigma^2}},\tag{\emph{Gaussian Kernel}}
\end{equation}
where $\sigma$ is a normalization factor determining the \emph{bandwidth} of the Gaussian kernel~\citep{gartner2003survey}. For simplicity, we ignore the multiplicative factor $(\sqrt{2 \pi} \sigma)^{-n}$ (unless noted otherwise), as our focus is on multiplicative approximations and scaling has no effects. 

\paragraph{Linkage-based hierarchical clustering.} Among various algorithms popular in practice for HC, we focus on two very popular ones: \emph{single-linkage} and \emph{average-linkage}, which are two simple agglomerative clustering algorithms that recursively merge pairs of nodes (or super nodes) to create the final binary tree. Single-linkage picks the pair of super nodes with the maximum similarity weight between their data points, \ie merges $A$ and $B$ maximizing $\max_{i\in A,j\in B}w_{ij}$.  On the contrary, average-linkage picks the pair of super-nodes with maximum average similarity weight at each step, \ie merges  $A$ and $B$ maximizing $\frac{w_{AB}}{\lvert A\rvert\cdot\lvert B \rvert}$, where $w_{AB}:=\sum_{i\in A, j\in B}w_{ij}$. Average-linkage has an approximation ratio of $1/3$ for maximizing the $\Jfun$ objective function~\citep{joshNIPS}, and this factor is tight~\citep{charikar2018hierarchical}.

\paragraph{General upper bounds.} In order to analyze the linkage-based clustering algorithms and our proposed algorithms, we propose a natural upper bound on the value of the $\Jfun$ objective function. The idea is to decompose the objective function $\Jfun$ into contributions of \emph{triple} of vertices: 
\begin{align}
\Jfun(\mathcal T)=\sum_{i<j<k}\left(w_{ij}\ind{\Ev^k_{ij}}+w_{jk}\ind{\Ev^i_{jk}}+w_{ik}\ind{\Ev^j_{ik}}\right)\nonumber
\end{align}
where $\Ev^{x}_{yz}$ denotes the event that $x$ is separated first among the vertices of triple $\{x,y,z\}$ in tree $\mathcal{T}$. Note that the final tree scores only one of the similarity weights between the triple $\{i,j,k\}$. Given this observation, we define the following benchmark:
\begin{align*}
\label{eq:max-bench}
\mbench\triangleq\sum_{i<j<k}\max(w_{ij},w_{jk},w_{ik}),
\end{align*}
Clearly, for all trees $\mathcal{T}$,  $\Jfun(\mathcal T)\leq \mbench$.

\paragraph{One-dimensional benchmarks.} Consider the special case of 1D data points (with any monotone distance-based similarity measure as in \Cref{def:dist-base}), and suppose $v_1\leq v_2\leq \ldots\leq v_n\in \mathbb{R}$. Now, for any triple $i<j<k$,  as a simple observation we have $w_{ik}\leq \min(w_{ij},w_{jk})$. Hence we can modify the above benchmark to obtain two refined new benchmarks:
\begin{align*}
 &\Dmbench\triangleq\sum_{i<j<k}\max(w_{ij},w_{jk}),\\
&\Dsbench\triangleq\sum_{i<j<k}\left(w_{ij}+w_{jk}\right).
\end{align*}
Again, clearly for all trees $\mathcal{T}$ we have: 
\begin{equation*}
\hspace{-1mm}\Jfun(\mathcal T)\leq \Dmbench\leq \Dsbench
\end{equation*}


\section{Hierarchical Clustering in 1D}\label{sec:1d}
\newcommand{\ang}[2]{\angle\left(#1,#2\right)}
\newcommand{\pot}{\Phi}
\newcommand{\als}{\texttt{Score-AL}}
In this section we look at the extreme case where the feature vectors have $d=1$, and we try to analyze the popular/natural algorithms existing in this domain by evaluating how well they approximate the objective function $\Jfun$. We focus on average-linkage and \textsc{Random Cut} (will be formally defined later). In particular, \textsc{Random Cut} is a building block of our algorithm for high-dimensional data given in Section~\ref{sec:hd}.

We show (\romannum{1}) average-linkage gives a $1/2$-approximation, and can obtain no better than $3/4$ fraction of the optimal objective value in the worst-case. We then show (\romannum{2}) \textsc{Random Cut} also is a $1/2$-approximation (in expectation) and this factor is tight.  
In the Appendix~\ref{app:slc-greedy}, we discuss other simple algorithms: single-linkage and greedy cutting (will be defined formally later). We start by the observation that greedy cutting and single-linkage output the same tree (and so are equivalent). We finish by showing (\romannum{3}) there is an instance where single-linkage attains only $\frac{1}{2}$ of the optimum objective value. We further show on this instance both average-linkage and random cutting are almost optimal.

For the rest of this section, suppose we have 1D points $x_1\leq\ldots,\leq x_n\in\mathbb{R}$, where $w_{ij}=\dfun(\lVert x_i-x_j \rVert)$ for some monotone non-increasing function $\dfun:\mathbb{R}\rightarrow [0,1]$.

\paragraph{Average-linkage.} 
In order to simplify the notation, we define $w_{AB}$ to be $\sum_{i\in A,j\in B}w_{ij}$ for any two sets $A$ and $B$. We first prove a simple structural property of average-linkage in 1D.
\begin{lemma}
\label{lemma:AL-neighbour}
For $d=1$, under any monotone distance-based similarity measure $w_{ij} = \dfun(\lVert x_i-x_j \rVert)$,  average-linkage can always merge neighbouring clusters. 
\end{lemma}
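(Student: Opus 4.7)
The plan is to prove this by induction on the number of merges performed so far, maintaining the invariant that every current cluster is an interval (a contiguous set of the sorted points $x_1 \leq \cdots \leq x_n$). The base case is immediate since each singleton is trivially an interval. The inductive step requires showing that, assuming all current clusters are intervals, there is always a pair of \emph{neighbouring} intervals whose average similarity attains the maximum, so that average-linkage can choose this pair for its next merge.

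The key step is the following comparison. Let $I_1, I_2, \ldots, I_m$ be the current clusters ordered left to right along the line, and suppose for contradiction that average-linkage's best pair is $(I_a, I_b)$ with $b \geq a+2$. For every $i \in I_a$, $k \in I_{a+1}$, $j \in I_b$, the point $x_k$ lies between $x_i$ and $x_j$, so $\lVert x_i - x_k \rVert \leq \lVert x_i - x_j \rVert$, and hence $w_{ik} \geq w_{ij}$ by monotonicity of $\dfun$. Summing the inequality $w_{ik} \geq w_{ij}$ over $i \in I_a$, $k \in I_{a+1}$ for each fixed $j \in I_b$, then summing over $j \in I_b$, I expect to obtain
\[
\lvert I_b \rvert \cdot w_{I_a I_{a+1}} \;\geq\; \lvert I_{a+1} \rvert \cdot w_{I_a I_b},
\]
which rearranges to $\frac{w_{I_a I_{a+1}}}{\lvert I_a\rvert \lvert I_{a+1}\rvert} \geq \frac{w_{I_a I_b}}{\lvert I_a\rvert \lvert I_b\rvert}$. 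Thus the neighbouring pair $(I_a, I_{a+1})$ has average similarity at least as large as the non-neighbouring pair $(I_a, I_b)$, so average-linkage is free to pick a neighbouring pair. Merging two neighbouring intervals yields another interval, closing the induction.

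I do not anticipate a serious obstacle: the argument is essentially a two-stage averaging trick exploiting monotonicity of $\dfun$ on distances that are themselves monotone along the line. The only subtlety is bookkeeping for the case where multiple pairs tie for the maximum average similarity; there I will simply note that the lemma asserts that average-linkage \emph{can} (not must) merge neighbouring clusters, i.e., the tie-breaking rule can always be chosen to preserve the interval invariant.
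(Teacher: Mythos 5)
Your proof is correct and follows essentially the same route as the paper's: induction maintaining the interval invariant, plus the pointwise inequality $w_{ik}\ge w_{ij}$ for a middle cluster $I_{a+1}$ sitting between $I_a$ and $I_b$, averaged to show the neighbouring pair's average similarity dominates. Your explicit summation and the remark about tie-breaking are slightly more careful than the paper's contradiction-style phrasing, but the underlying argument is identical.
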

\begin{proof}
We do the proof by induction. This property holds at the first step of average-linkage (base of induction). Now suppose up to a particular step of average-linkage, the algorithm has always merged neighbours. At this step, we have an ordered collection of super nodes $(C_1,\ldots,C_m)$, where for every $i<j$ and for all $(x,y)\in C_i\times C_j$, $x\leq y$. If at this step average-linkage doesn't merge two neighbouring super nodes, then there exists $i<j<k$, where 
\[
\frac{w_{C_iC_j}}{\lvert C_i\rvert \lvert C_j\rvert} <\frac{w_{C_iC_k}}{\lvert C_i\rvert \lvert C_k\rvert} \Rightarrow \frac{\sum_{x\in C_i}w_{xC_j}}{\lvert C_j\rvert}<\frac{\sum_{x\in C_i}w_{xC_k}}{ \lvert C_k\rvert}
\]
But note that because of the monotone non-increasing distance-based similarity weights, for any triple $(x,y,z)\in C_i\times C_j\times C_k$ we have $w_{xy}\geq w_{xz}$, This is a contradiction to the above inequality, which finishes the inductive step.
\end{proof}

\begin{theorem}\label{thm:average-linkage}
	For $d = 1$, under any monotone distance-based similarity measure $w_{ij} = \dfun(\lVert x_i-x_j \rVert)$, average-linkage obtains at least $\tfrac{1}{2}$ of the $\Dsbench$, and hence is a $\frac12$-approximation for the objective $\Jfun$.
\end{theorem}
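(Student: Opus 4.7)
The plan is a local (per-merge) charging argument in the spirit of Moseley--Wang. Using \lemref{lemma:AL-neighbour}, I first observe that on the AL tree, for every triple $i<j<k$ the middle point $j$ can never be the first vertex separated from the other two: since AL only merges neighbouring intervals, $j$'s cluster must first fuse with a cluster containing $i$ or a cluster containing $k$ before the third vertex joins, so the LCA of $\{i,j,k\}$ coincides with the LCA of $j$ and whichever of $\{i,k\}$ is still separate. Consequently the triple decomposition of $\Jfun$ from the Preliminaries simplifies on the AL tree to
\[
\Jfun(\mathcal T_{\text{AL}}) = \sum_{i<j<k}\left(w_{ij}\ind{\Ev^k_{ij}} + w_{jk}\ind{\Ev^i_{jk}}\right),
\]
and every triple can be attributed to the unique \emph{decisive} AL merge at which $\{i,j\}$ or $\{j,k\}$ first lie in the same cluster.

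Next I focus on a single AL merge of neighbouring clusters $A$ (left) and $B$ (right) and identify the triples whose decisive merge is this step: because clusters remain intervals, each such triple has one point in $A$, one in $B$, and the third in some other cluster $C$ that sits entirely to the left of $A$ or entirely to the right of $B$. In the left case the sorted triple $(c,a,b)$ has middle $a$, $\Dsbench$ credits $w_{ca}+w_{ab}$, and AL earns $w_{ab}$; in the right case the sorted triple $(a,b,c)$ has middle $b$, $\Dsbench$ credits $w_{ab}+w_{bc}$, and AL again earns $w_{ab}$. Summing over $c \in C$, AL's total earnings from triples assigned to $(A,B)$ with the third vertex in $C$ equal $|C|\,w_{AB}$, while the corresponding $\Dsbench$ mass equals $|B|\,w_{CA}+|C|\,w_{AB}$ (left) or $|C|\,w_{AB}+|A|\,w_{BC}$ (right).

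So obtaining half of $\Dsbench$ per cluster $C$ reduces to the two inequalities $|C|\,w_{AB} \ge |B|\,w_{CA}$ and $|C|\,w_{AB} \ge |A|\,w_{BC}$. Both are immediate rearrangements of the greedy conditions driving AL's choice at this step: because AL picks the pair of maximum average similarity, $\tfrac{w_{AB}}{|A||B|} \ge \tfrac{w_{AC}}{|A||C|}$ and $\tfrac{w_{AB}}{|A||B|} \ge \tfrac{w_{BC}}{|B||C|}$, which rearrange exactly to the desired bounds. Summing the per-merge, per-$C$ inequality over all AL merges and all other clusters then gives $\Jfun(\mathcal T_{\text{AL}}) \ge \tfrac{1}{2}\Dsbench$, and combining with the Preliminaries bound $\Jfun(\mathcal T) \le \Dsbench$ yields the claimed $\tfrac{1}{2}$-approximation. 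The main obstacle I expect is verifying that the decisive-merge assignment really partitions the triples counted by $\Dsbench$ (no double counting, no omissions); the fact, supplied by \lemref{lemma:AL-neighbour}, that AL's clusters in 1D remain contiguous intervals throughout is exactly what keeps the case analysis clean, reducing it to just ``$C$ left of $A$'' vs ``$C$ right of $B$''.
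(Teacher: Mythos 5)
Your proposal is correct and is essentially the paper's own argument in different packaging: the paper performs the same per-merge accounting via a potential function over separated triples (your ``decisive merge'' of a triple is exactly the step at which it ceases to be separated, and your per-merge half-charge is the paper's inequality $\Delta\texttt{Score-AL}+\tfrac12\Delta\Phi\ge 0$), with the heart of both proofs being the identical rearranged greedy inequalities $w_{AB}|C|\ge w_{AC}|B|$ and $w_{AB}|C|\ge w_{BC}|A|$ obtained from \lemref{lemma:AL-neighbour} and the average-linkage selection rule. If anything, your per-cluster formulation of these inequalities is marginally cleaner, since the paper states them with $C$ (resp.\ $D$) equal to the entire union of clusters to the left of $A$ (resp.\ right of $B$), where they hold only after averaging the individual per-cluster greedy conditions.
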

\vspace*{-0.1in}
\begin{proof} The proof uses a potential function argument. Given a partitioning of points $x_1\leq x_2\leq\ldots\leq x_n$ into sets $S_1,\ldots,S_m$, a triple of points $i<j<k$ is called \emph{separated} if no pair of these three points belong to the same set. 
Now, the potential function $\pot$ gets $\{S_1,\ldots,S_m\}$ as input, and maps it to a summation over all separated triples by $\{S_i\}_{i\in[m]}$ as below:
\vspace*{-0.05in}
\begin{equation*}
\pot(S_1,\ldots,S_m)=\sum_{i<j<k:~\textrm{$(i,j,k)$ is separated}}{\left(w_{ij}+w_{jk}\right)}
\end{equation*}
Note that $\pot(\{x_1\},\ldots,\{x_n\})=\Dsbench$, and $\pot(\{x_1,\ldots,x_n\})=0$. 

We now run average-linkage. Based on the definition of $\Jfun$, every time that average-linkage merges two super nodes $A$ and $B$ it scores $w_{AB}\cdot (n-\lvert A\rvert -\lvert B\rvert)$, and sum over of all these per-step scores is equal to its final objective value. Let $\als$ denotes the variable that stores of the score of average-linkage over time. At every step we keep track of (1) the change in the potential function, denoted by $\Delta\pot$ and (2) how much progress average-linkage had towards the final objective value, denoted by $\Delta \als$. In order to prove $1/2$-approximation, it suffices to show that at every step of average-linkage we have:
\begin{equation}
\label{eq:progress}
\Delta\als+\frac{1}{2} \Delta \pot \geq 0. \tag{$\star$}
\end{equation}
To see this note that average-linkage starts with all points separated, and ends with one cluster/super node with all the points. Therefore, by summing \cref{eq:progress} over all merging steps of average-linkage and canceling the terms in the telescopic sum, we have: 
\begin{align*}
&(\Jfun(\mathcal{T}_{\textrm{AL}})-0)+\\
&\frac{1}{2}\left(\pot(\{x_1,\ldots,x_n\})-\pot(\{x_1\},\ldots,\{x_n\})\right)\geq 0.
\end{align*}
Plugging values of $\pot$ at the start and the end, we get:
\[
\Jfun(\mathcal{T}_{\textrm{AL}})\geq \frac{1}{2}(\Dsbench),
\]
which implies the $1/2$-approximation factor. 
	
	To prove \cref{eq:progress}, we focus on a single step of average-linkage where by~\Cref{lemma:AL-neighbour} some two neighboring clusters denoted as $A$ and $B$ get merged. Let $C$ denote the nodes on the left of $A$ and let $D$ denote the nodes on the right of $B$ (\Cref{fig:AL-merge}).\footnote{Note that $C,D$ may be empty sets.} By merging two clusters $A,B$, the change in the score of average-linkage is:
\[
	\Delta\als = w_{AB}(\lvert C\rvert+\lvert D\rvert)
\]	
Moreover, any separated triple $i<j<k$ such that either $i\in C, j\in A,k\in B$ or $i\in A,j\in B,k\in D$ will not be separated anymore after this merge. For each such triple, the potential function drops by $w_{ij}+w_{jk}$. Therefore:
	\[
	-\Delta\pot = \left(w_{AB}|C| + w_{AC}|B|\right)+ \left(w_{AB}|D|+ w_{BD}|A| \right)
	\]
		
To compare the two, we show that 	$w_{AB}(|C|+|D|) \ge w_{AC}|B| + w_{BD}|A|$, and hence:
\begin{align*}
\Delta\als&= w_{AB}(|C|+|D|) \ge w_{AC}|B| + w_{BD}|A|\\
&= -\Delta\pot-\Delta\als~,
\end{align*}
which implies \cref{eq:progress} as desired. To prove the last claim, note that average-linkage picks the pair $(A,B)$ over both $(A,C)$ and $(B,D)$. Therefore, by definition of average-linkage: 
	\[
	\dfrac{w_{AB}}{|A||B|}\ge \dfrac{w_{AC}}{|A||C|}\implies w_{AB}|C| \ge w_{AC}|B|
	\]
	\[
	\dfrac{w_{AB}}{|A||B|}\ge \dfrac{w_{BD}}{|B||D|}\implies w_{AB}|D| \ge w_{BD}|A|
	\]
By summing  above inequalities, we get $w_{AB}(|C|+|D|) \ge w_{AC}|B| + w_{BD}|A|$, which finishes the proof.
\end{proof}
\begin{figure}[ht]
		
		\centering
		\includegraphics[width=8cm]{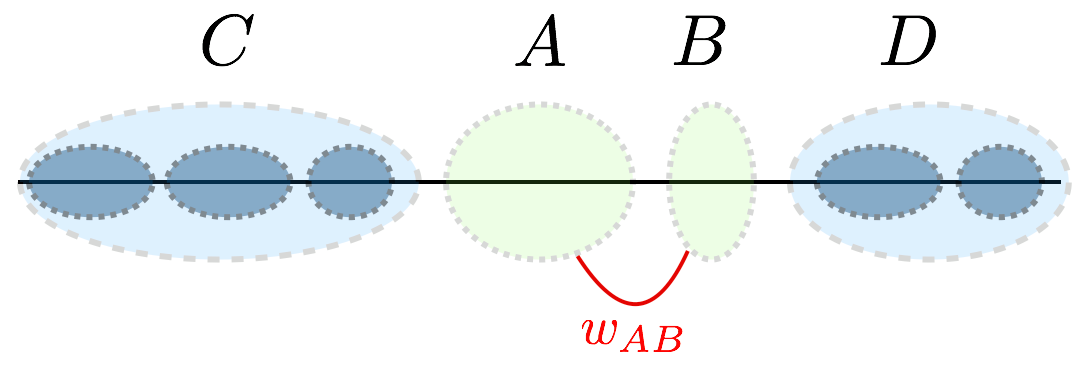}
		\caption{Illustration of the merging process in 1D.}
		\label{fig:AL-merge}
	\end{figure}
As a final note, in \Cref{sec:hard-low-dim} we discuss hard instances for average-linkage under Gaussian kernels when $d=1$, where we essentially show the result of \Cref{thm:average-linkage} is tight when comparing against $\Dsbench$, and there is no hope to get an approximation factor better than $\frac{3}{4}$ for average-linkage in general for $d=1$. 

\paragraph{\textsc{Random Cut}.} The following algorithm (termed as \textsc{Random Cut}) picks a uniformly random point $r$ in the range $[x_1, x_n]$ and divides the set of points into left and right using $r$ as the splitter. The same process is applied recursively until the leaves are reached.

\begin{algorithm}
	\caption{\textsc{Random Cut}}\label{alg:random-cut}
	\begin{algorithmic}
		\STATE {\bfseries Input: } Integer $n$, points $x_1 \le  \dots \le x_n$.\;
		\STATE  {\bfseries Output:} Binary tree with leaves $(x_1, \dots, x_n)$\;
		\STATE\;		
		\IF {$n == 1$} {
			\RETURN New leaf containing $x_1$.
		}
		\ENDIF
		\STATE Pick $r \sim U([x_1, x_n])$ \;
		\STATE Let $m$ be the largest integer such that $x_m \le r$. \;
		\STATE Create new internal tree node $x$. \;
		\STATE $x.left = \textsc{Random Cut}(m, x_1, \dots, x_m)$\;
		\STATE $x.right = \textsc{Random Cut}(n - m, x_{m + 1}, \dots, x_n)$\;
		\RETURN $x$
\end{algorithmic}
\end{algorithm}

\begin{lemma}
For $d = 1$ under any monotone distance-based similarity measure $w_{ij} = g(x_i,x_j)$ the algorithm \textsc{Random Cut} obtains at least $1/2$ fraction of the $\Dmbench$, and  hence gives a $\frac12$-approximation for the objective $\Jfun$ in expectation.
\end{lemma}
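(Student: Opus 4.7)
The plan is to analyze \textsc{Random Cut} triple by triple via the decomposition
\[
\Jfun(\mathcal T) = \sum_{i<j<k}\left(w_{ij}\ind{\Ev^k_{ij}}+w_{jk}\ind{\Ev^i_{jk}}+w_{ik}\ind{\Ev^j_{ik}}\right),
\]
and to compare the expected contribution of each triple $i<j<k$ (so $x_i\le x_j\le x_k$) against $\max(w_{ij},w_{jk})$. Summing these triplewise comparisons will give $\Ex{\Jfun(\mathcal T_{\mathrm{RC}})}\ge \tfrac12\,\Dmbench\ge \tfrac12\,\Jfun(\mathcal T^\star)$, which is the desired bound.

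First I would observe that in 1D the middle point $j$ of the triple can never be the first split off from $\{i,j,k\}$: any axis-aligned cut that separates the triple either lies in $[x_i,x_j)$ and peels off $i$, or lies in $[x_j,x_k)$ and peels off $k$. Consequently the event $\Ev^j_{ik}$ has probability zero, and the triple's contribution reduces to $w_{jk}\ind{\Ev^i_{jk}}+w_{ij}\ind{\Ev^k_{ij}}$. The technical core is then the following uniformity claim, which I would prove by induction on recursion depth: conditional on the first cut of \textsc{Random Cut} that falls inside $[x_i,x_k]$, its location is uniformly distributed on $[x_i,x_k]$. At any recursive call the active interval $[x_\ell,x_r]$ contains $[x_i,x_k]$; the fresh cut $r\sim U([x_\ell,x_r])$ is uniform on $[x_i,x_k]$ conditional on $r\in[x_i,x_k]$, and otherwise the triple descends whole into a strictly smaller interval still containing $[x_i,x_k]$, where the induction hypothesis applies.

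Given the uniformity claim, writing $\alpha\triangleq \tfrac{x_j-x_i}{x_k-x_i}$, the events $\Ev^i_{jk}$ and $\Ev^k_{ij}$ occur with probabilities $\alpha$ and $1-\alpha$ respectively, so the expected contribution of the triple is $\alpha\,w_{jk}+(1-\alpha)\,w_{ij}$. To finish, I would show that this is at least $\tfrac12\max(w_{ij},w_{jk})$ by invoking monotonicity of $\dfun$: if $w_{jk}\ge w_{ij}$ then $x_k-x_j\le x_j-x_i$, hence $\alpha\ge \tfrac12$, and therefore $\alpha\,w_{jk}+(1-\alpha)\,w_{ij}\ge \alpha\,w_{jk}\ge \tfrac12\max(w_{ij},w_{jk})$; the case $w_{ij}>w_{jk}$ is symmetric. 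Summing over all triples completes the proof.

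The main obstacle is the uniformity lemma for the first cut that lands inside $[x_i,x_k]$ — this is where the recursive nature of \textsc{Random Cut} must be handled carefully — but beyond it the per-triple inequality is short and in fact tight: monotonicity of $\dfun$ is exactly what aligns the larger of $w_{ij},w_{jk}$ with the larger of the two probabilities $\alpha,1-\alpha$.
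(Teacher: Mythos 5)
Your proof is correct and follows essentially the same route as the paper's: condition on the first cut of \textsc{Random Cut} that lands inside $[x_i,x_k]$, observe that it is uniformly distributed there (so the longer subinterval --- which by monotonicity carries the smaller weight --- is cut with probability at least $1/2$), and sum the per-triple bound $\alpha\,w_{jk}+(1-\alpha)\,w_{ij}\ge\tfrac12\max(w_{ij},w_{jk})$ over all triples; your explicit induction for the uniformity claim is a detail the paper leaves implicit. One tiny caveat: the implication ``$w_{jk}\ge w_{ij}\Rightarrow x_k-x_j\le x_j-x_i$'' can fail when $\dfun$ is non-increasing but not strictly decreasing, so you should case on the interval lengths rather than on the weights (in the exceptional case the two weights are equal and the bound is immediate).
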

\begin{proof}
For every triple $i < j < k$ conditioned on partitioning the interval $[i,k]$ for the first time the longer edge amongst $(x_i, x_j)$ and $(x_j, x_k)$ gets cut with probability $p_1 \ge 1/2$ and the shorter with probability $p_2 \le 1/2$ so that $p_1 + p_2 = 1$.
W.l.o.g let's assume that $(x_i, x_j)$ is the longer edge. 
Then the algorithm \textsc{Random Cut} scores $w_{jk} p_1 + w_{ij}(1 - p_1)$ in expectation for the $(i, j,k)$ triple. 
Note that: 
\begin{align*}
w_{jk} p_1 + w_{ij}(1 - p_1) = (w_{jk}  - w_{ij})\left(p_1 - \frac12\right) + \frac12(w_{ij} + w_{jk})\ge \frac12(w_{ij} + w_{jk})
\end{align*}
By the linearity of expectation taking the sum over all triples $i < j < k$ \textsc{Random Cut} scores at least $1/2$ of $\Dmbench$ in expectation and hence gives $\frac12$-approximation for the objective $\Jfun$.
\end{proof}

\section{Hierarchical Clustering in High Dimensions}\label{sec:hd}
We now describe an algorithm \textsc{Projected Random Cut} which we use for high-dimensional data. This algorithm is given as Algorithm~\ref{alg:gp-random-cut}. It first projects on a random spherical Gaussian vector and then clusters the resulting projections using \textsc{Random Cut}.

\begin{algorithm}
	\caption{\textsc{Projected Random Cut}}\label{alg:gp-random-cut}
	\begin{algorithmic}
		\STATE {\bfseries Input: } Integer $n$, vectors $v_1, \dots, v_n \in \mathbb R^d$.\;
		\STATE  {\bfseries Output:} Binary tree with leaves $(v_1, \dots, v_n)$\;
		\STATE\;		
		\STATE Pick a random Gaussian vector $\mathbf g \sim N_d(0,1)$\;
		\STATE Compute dot products $x_i = \langle v_i, \mathbf g \rangle$ \;
		\STATE $x_{i_1}, \dots x_{i_n} = \textsc{Sort}(x_1, \dots, x_n)$ \;
		\RETURN $\textsc{Random Cut}(n, x_{i_1}, \dots, x_{i_n})$\;		
\end{algorithmic}
\end{algorithm}

\begin{theorem}\label{thm:high-dimensional}
For any input set of vectors $v_1, \dots, v_n \in \mathbb R^d$ the algorithm \textsc{Projected Random Cut} gives an $\alpha $-approximation (in expectation) for the objective $\Jfun$ under the Gaussian kernel similarity measure $w_{ij} \sim e^{-\|v_i - v_j\|_2^2/2\sigma^2}$ where $\alpha = (1 + \delta)/3$ for $\delta = \min_{i,j} \exp( - \frac{\|v_i - v_j\|_2^2}{2\sigma^2})$.
\end{theorem}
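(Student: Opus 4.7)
The plan is to analyze the expected score of \textsc{Projected Random Cut} triple-by-triple and then aggregate against the benchmark $\mbench$, which upper bounds $\Jfun(\mathcal T^*)$ for any tree (each triple contributes at most its maximum-weight edge to $\Jfun$).

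First I fix a triple $\{i,j,k\}$ and set $w_i=w_{jk}$, $w_j=w_{ik}$, $w_k=w_{ij}$. Let $q_x$ denote the probability, over the Gaussian direction $\mathbf{g}$ and the 1D random cut, that vertex $x$ is the first of the triple to be split off from the other two. Then $q_i+q_j+q_k=1$ and the expected contribution of the triple to the PRC score is $S_{ijk}=\sum_x w_x q_x$. Conditional on $\mathbf{g}$, the 1D random cut separates $x$ first with probability $\mathrm{dist}(x_x,[x_y,x_z])/(x_{\max}-x_{\min})$, so by rotational invariance of the Gaussian, $q_x$ depends only on the shape of the triangle on $\{v_i,v_j,v_k\}$. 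I also use the standard Gaussian-projection fact that the probability that $x$ is the middle vertex in projection equals $\theta_x/\pi$, where $\theta_x$ is the interior angle of the triangle at $x$.

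The crux of the proof, and the step I expect to be the main obstacle, is a \emph{similar sorting} lemma: $w_x\ge w_y$ if and only if $q_x\ge q_y$. Since $w_x = e^{-d_{yz}^2/(2\sigma^2)}$ is decreasing in $d_{yz}$, and since the vertex opposite the shortest side has the smallest interior angle (law of sines), this is equivalent to saying that the sharpest vertex of the triangle has the largest probability of being cut off first. The intuition is that a small angle $\theta_x$ makes $x$ both less often the middle in projection (an event with probability $\theta_x/\pi\le 1/3$) and, when it is an extreme, typically far from the tight projected cluster formed by the other two vertices. Making this rigorous requires controlling both factors in the decomposition
\begin{equation*}
q_x\;=\;(1-\theta_x/\pi)\cdot \E\!\left[\tfrac{d_x}{d_x+L}\,\Big|\,x\text{ extreme}\right],
\end{equation*}
where $d_x=\mathrm{dist}(x_x,[x_y,x_z])$ is the gap from $x$ to the nearer of the other two projections and $L=|x_y-x_z|$ is the length of the projected opposite edge; the Gaussian kernel structure (the opposite side being shortest makes $L$ stochastically small relative to $d_x$) is what tilts the ratio in the right direction.

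Granting the lemma, Chebyshev's sum inequality for the similarly ordered sequences $(w_x)$ and $(q_x)$ (with $\sum q_x=1$) gives $S_{ijk}\ge\tfrac{1}{3}\sum_x w_x$. Setting $M_{ijk}=\max(w_i,w_j,w_k)$ and using $w_x\ge\delta$, this is at least $(M_{ijk}+2\delta)/3$. Since the Gaussian kernel is bounded by $1$, $M_{ijk}\le 1$ and hence $2\delta\ge\delta M_{ijk}$, so $S_{ijk}\ge(1+\delta)M_{ijk}/3$. Summing over all triples and using $\Jfun(\mathcal T^*)\le\mbench=\sum_{\{i,j,k\}}M_{ijk}$ finishes the argument:
\begin{equation*}
\E[\Jfun(\mathcal T_{\textsc{PRC}})]\;=\;\sum_{\{i,j,k\}} S_{ijk}\;\ge\;\tfrac{1+\delta}{3}\mbench\;\ge\;\tfrac{1+\delta}{3}\Jfun(\mathcal T^*).
\end{equation*}
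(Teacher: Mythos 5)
Your overall architecture is the same as the paper's: decompose the expected score over triples, compare each triple's expected contribution $\sum_x w_x q_x$ (with $\sum_x q_x = 1$) against its contribution $\max(w_{ij},w_{ik},w_{jk})$ to $\mbench$, and use $\Jfun(\mathcal T)\le\mbench$. Your setup of $q_x$, the identity $P(x \text{ is the middle vertex in projection})=\theta_x/\pi$, and the closing arithmetic $\tfrac13(M_{ijk}+2\delta)\ge\tfrac{1+\delta}{3}M_{ijk}$ are all correct. But the proof has a genuine gap exactly where you flag it: the \emph{similar sorting} lemma ($w_x\ge w_y \Leftrightarrow q_x\ge q_y$) is asserted with only heuristic support, and it is strictly \emph{stronger} than what the paper proves. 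The paper's lemma is only that the shortest (hence heaviest) edge is scored with probability $p_{13}\ge 1/3$; proving even this takes a nontrivial geometric argument --- condition on which side has the longest projection, express the conditional scoring probabilities via Thales's theorem and the law of sines as $p^{12}_{13}(\theta)=\frac{\sin\theta}{\sin(\theta+\theta_3)}\cdot\frac{\sin(\theta_1+\theta_3)}{\sin\theta_1}$, show pointwise in $\theta$ that $p^{12}_{13}(\theta)\ge p^{12}_{23}(\theta)$ via a product-to-sum identity, and integrate over the projection angle. Your lemma would require running this kind of comparison across all three conditioning events and aggregating, which is not obviously within reach of the same calculation; you give no argument for controlling the conditional expectation $\E[d_x/(d_x+L)\mid x\text{ extreme}]$, which is the hard part.

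Moreover, the gap cannot be papered over by substituting the weaker (proven) lemma into your Chebyshev step: $q_{\max\text{-weight edge}}\ge 1/3$ alone does not imply $\sum_x w_x q_x\ge\tfrac13\sum_x w_x$ (take weights $(1,0.9,0.1)$ with $q=(1/3,0,2/3)$: the left side is $0.4$ while the right side is about $0.67$). This is precisely why the paper closes differently: it sets $\gamma=\tfrac{2\delta/3}{1+\delta}$, handles the case $p_{13}\ge\tfrac13+\gamma$ immediately, and otherwise uses $p_{12}+p_{23}\ge\tfrac23-\gamma$ together with the Gaussian-kernel ratio bound $w_{12}/w_{13}\ge\delta$ to reach $\tfrac{1+\delta}{3}$. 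If your sorting lemma were established, your Chebyshev aggregation would be a cleaner (and slightly stronger) finish; as written, the argument is incomplete at its central step, and you should either prove the sorting lemma or fall back to the weaker ``heaviest edge scores with probability at least $1/3$'' statement plus the case analysis above.
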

\begin{proof}
Recall an upper bound on the optimum: 
$$OPT \le \mbench = \sum_{i < j < k} \max(w_{ij}, w_{ik}, w_{jk}).$$
Fix any triple $(i,j,k)$ where $i < j < k$. Note that the objective value achieved by the algorithm \textsc{Projected Random Cut} can also be expressed as $ALG = \sum_{i < j < k} ALG_{i,j,k}$ where $ALG_{i,j,k}$ is the contribution to the objective from the triple $(i,j,k)$ defined as follows.
Consider the tree constructed by the algorithm. If $v_k$ is the first vector in the triple $(v_i, v_j, v_k)$ to be separated from the other two in the hierarchical partition (starting from the root) then $A_{i,j,k}$ is defined to be $w_{ij}$ (in the other two cases when $i$ or $j$ are separated first the definition is analogous).
Note that since \textsc{Projected Random Cut} is a randomized algorithm $ALG_{i,j,k}$ is a random variable. By the linearity of expectation we have $\mathbb E[ALG] = \sum_{i < j < k} \mathbb E[ALG_{i,j,k}]$.
Thus in order to complete the proof it suffices to show that for every $i < j < k$ it holds that:
$$\mathbb E[ALG_{i,j,k}] \ge \alpha \max(w_{ij}, w_{ik}, w_{jk}).$$

Fix any triple $(v_i, v_j, v_k)$ which forms a triangle in $\mathbb R^d$. Conditioned on cutting this triangle for the first time let $(p_{ij},p_{ik},p_{jk})$ be the vector of probabilities corresponding to the events that the corresponding edge is not cut. I.e. this is the probability that we score the contribution of this edge in the objective. Note that $p_{ij} + p_{ik} + p_{jk} = 1$.

Consider any triangle whose vertices are $v_i, v_j, v_k$. To simplify presentation we set $i = 1, j = 2, k = 3$. We can assume that $\|v_2 - v_1\| \ge \|v_2 - v_3\| \ge \|v_1 - v_3\|$. Let $\theta_1 = \ang{v_1 - v_3}{ v_2 - v_3}$, $\theta_2 = \ang{v_2 - v_1} {v_3 - v_1}$ and $\theta_3 = \ang{v_1 - v_2}{v_3 - v_2}$ so that $\theta_1 \ge \theta_2 \ge \theta_3$. See Figure~\ref{fig:triangle}. Note that the probability that the $i$-th longest side of the triangle has the longest projection is then $\theta_i/\pi$. 

\begin{figure}
\centering
\begin{tikzpicture}[scale=2,rotate=0]
\tkzDefPoint(2,3){A}
\tkzDefShiftPoint[A](0:3){B}
\tkzDefShiftPoint[A](60:1.5){C}
\tkzDefShiftPoint[A](0:2){D}
\tkzDrawSegments(A,B B,C C,A C,D)
\tkzDrawPoints(A,B,C)
\tkzLabelPoint[below left](A){$v_1$}
\tkzLabelPoint[below right](B){$v_2$}
\tkzLabelPoint[above left](C){$v_3$}
\tkzMarkAngle[arc=l, size=8pt](A,C,B)
\tkzLabelAngle[pos=0.4](A,C,B){$\theta_1$}
\tkzMarkAngle[arc=ll, size=8pt](B,A,C)
\tkzLabelAngle[pos=0.5](B,A,C){$\theta_2$}
\tkzMarkAngle[arc=lll, size=8pt](C,B,A)
\tkzLabelAngle[pos=0.5](C,B,A){$\theta_3$}
\tkzDefLine[parallel=through A](C,D) \tkzGetPoint{a}
\tkzDrawLine(A,a)
\tkzLabelLine[pos=0.5,left](A,a){$(\ell_3)$}
\tkzDefLine[parallel=through D](C,D) \tkzGetPoint{d}
\tkzDrawLine(D,d)
\tkzLabelLine[pos=0.24,left](D,d){$(\ell_1)$}
\tkzDefLine[orthogonal=through B](D,C) \tkzGetPoint{p}
\tkzDrawLine(B,p)
\tkzLabelLine[pos=0.7,right](B,p){$(\ell_2)$}
\tkzInterLL(B,p)(D,d) \tkzGetPoint{I}
\tkzMarkRightAngle[size=0.1](B,I,D)
\tkzInterLL(B,p)(A,a) \tkzGetPoint{J}
\tkzMarkRightAngle[size=0.1](B,J,A)
\tkzDrawPoints(D,I,J)
\tkzLabelPoint[below left](D){$x$}
\tkzLabelPoint[right](I){$y$}
\tkzLabelPoint[below left](J){$z$}
\tkzMarkAngle[fill=yellow,opacity=0.5, size=0.8](A,C,D)
\tkzLabelAngle[pos=0.7](A,C,D){$\theta$}
\tkzMarkSegment[color=blue, mark=s|](B,D)
\tkzMarkSegment[color=blue, mark=s||](A,D)
\tkzMarkSegment[color=blue, mark=|](B,I)
\tkzMarkSegment[color=blue, mark=||](J,I)
\tkzDefPointWith[linear, K=0.4](J,I)
\tkzGetPoint{P}
\tkzDefVector[colinear= at J](J,P){T}
\tkzDefVector[orthogonal](J,P){W}
\tkzDrawVector[line width=2pt,color = red](J,T)
\tkzDrawVector[line width=2pt,color = red](J,W)
\tkzLabelPoint[above left](T){$\mathbf g$}
\tkzLabelPoint[above right](W){$\mathbf g^\perp$}
\end{tikzpicture}
\caption{Projecting the triangle $(v_1, v_2, v_3)$ on $\mathbf g$.} \label{fig:triangle}
\end{figure}
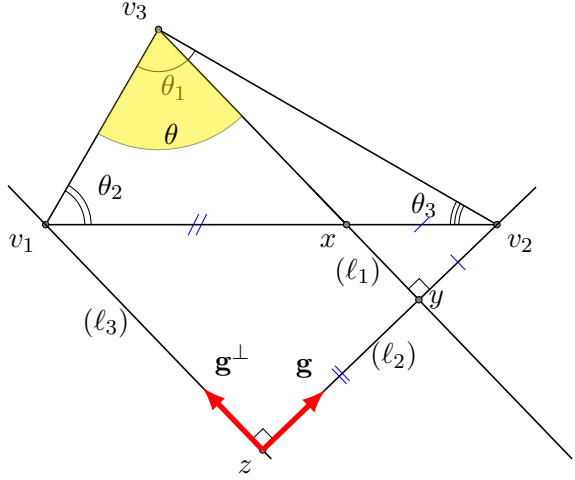
\begin{lemma}
If $(v_1, v_3)$ is the shortest edge in the triangle $(v_1, v_2, v_3)$ then it holds that $p_{13} \ge \frac13$
\end{lemma}
\begin{proof}
Suppose that $\mathbf g$ forms an angle $\pi/2 - \theta$ with $(v_3 - v_1)$, i.e. the vector $\mathbf g^\perp$ orthogonal to $\mathbf g$ forms angle $\theta$ with $v_3 - v_1$. 
We define three auxiliary points $x, y, z$ as follows (see also Figure~\ref{fig:triangle}.). Let $\ell_1$ be a line parallel to $\mathbf g^\perp$ going through $v_3$, let $\ell_2$ to be a line parallel to $\mathbf g$ going through $v_2$ and let $\ell_3$ be the line parallel to $\ell_1$ going through $v_1$.
We then let $x$ be the intersection of $\ell_1$ and $(v_1, v_2)$, $y$ be the intersection of $\ell_1$ and $\ell_2$ and $z$ be the intersection of $\ell_2$ and $\ell_3$ (see Fig~\ref{fig:triangle}).

Thus the projections of $v_3 - v_1$, $v_2 - v_1$ and $v_2 - v_3$ on $\mathbf g$ are $y - z$, $v_2 - z$ and $v_2 - y$.
Hence conditioned on $(v_1, v_2)$ having the longest projection the probability of scoring the contribution of $(v_2, v_3)$ is given as $ p_{23}^{12}(\theta) = \frac{\|y - z\|}{\|y - v_2\|}$ since we are applying the \textsc{Random Cut} algorithm after the projection.
Note that by Thales's theorem we have $p_{23}^{12}(\theta) = \frac{\|y - z\|}{\|y - v_2\|} = \frac{\|x - v_1\|}{\|v_2 - v_1\|}$.
Applying the law of sines to the triangles $(v_1, v_2, v_3)$ and $(x, v_1, v_3)$ we have:
$$\frac{\sin \theta_1}{\|v_1 - v_2\|} = \frac{\sin (\theta_1 + \theta_2)}{\|v_1 - v_3\|}, \quad \frac{\sin \theta} {\|v_1 - v_3\|} = \frac{\sin (\theta + \theta_2)}{\|v_1 - x\|},$$
where we used the fact that $\sin \theta_3 = \sin (\pi - \theta_1 - \theta_2) = \sin (\theta_1 + \theta_2)$. Similarly, we use the fact that $\sin (\ang{v_3 - x}{ v_1 - x}) = \sin (\theta + \theta_2)$.
Using the above we can express $p_{23}^{12}(\theta)$ as:
$$p_{23}^{12}(\theta) = \frac{\sin \theta}{\sin (\theta + \theta_2)} \frac{\sin (\theta_1 + \theta_2)}{\sin \theta_1}$$

Thus the overall probability of scoring the contribution of the edge $(v_1, v_3)$ conditioned on $(v_1, v_2)$ having the longest projection which we denote as $p_{13}^{12}$ is given as:
\begin{align*}
p_{23}^{12} &= \frac{1}{\theta_1}\int_{0}^{\theta_1} p_{23}^{12} d\theta  
\end{align*}

Similarly, consider the probability $p_{13}^{12}$ of scoring the contribution of $(v_1,v_3)$ conditioned on $(v_1,v_2)$ having the longest projection.
We can express it as:
\begin{align*}
p_{13}^{12} &=  \frac{1}{\theta_1} \int_0^{\theta_1} p_{13}^{12}(\theta) d\theta,
\end{align*}
where $$p_{13}^{12}(\theta) = \frac{\sin \theta}{\sin (\theta + \theta_3)} \frac{\sin (\theta_1 + \theta_3)}{\sin \theta_1}.$$ 

Below we will show that $p_{13}^{12} \ge p_{23}^{12}$.
In fact, we will show that for any fixed $\theta \in [0, \theta_1]$ it holds that $p_{13}^{12}(\theta) \ge p_{23}^{12}(\theta)$.
Comparing the expressions for both it suffices to show that $\frac{\sin(\theta_1 + \theta_3)}{\sin (\theta + \theta_3)} \ge \frac{\sin (\theta_1 + \theta_2)}{\sin (\theta + \theta_2)}$ for all $\theta \in [0, \theta_1]$.
Since $\theta_1 = \pi - \theta_2 - \theta_3$ this is equivalent to:
$$\frac{\sin\theta_3}{\sin (\theta + \theta_2)} \le \frac{\sin\theta_2}{\sin(\theta + \theta_3)}$$
It suffices to show that:
$$\sin\theta_3 \sin(\theta + \theta_3) \le \sin \theta_2 \sin (\theta + \theta_2)$$
Using the formula $\sin \alpha \sin \beta = \frac12(\cos(\alpha - \beta) - \cos(\alpha + \beta))$ it suffices to show that:
$$\cos(\theta + 2 \theta_3) \ge \cos(\theta + 2 \theta_2).$$
The above inequality follows for all $\theta \in [0, \pi - \theta_2 - \theta_3]$ since $\theta_3 \le \theta_2$ .
This shows that $p_{13}^{12} \ge p_{23}^{12}$. Since the probability that $(v_1, v_2)$ has the longest projection is $\theta_1 / \pi$ we have that the probability of scoring $(v_1, v_3)$ and $(v_1, v_2)$ having the longest projection is at least $\frac{\theta_1}{2\pi}$.
An analogous argument shows that the probability of scoring $(v_1, v_3)$ and $(v_2, v_3)$ having the longest projection is at least $\frac{\theta_2}{2\pi}$.

Putting things together: 
$$p_{13} \ge \frac12 \frac{\theta_1 + \theta_2}{\pi} = \frac12 \frac{\pi - \theta_3}{\pi}  \ge \frac13,$$
where we used that $\theta_3 \le \pi/3$ since $\theta_3 \le \theta_2 \le \theta_1$.
\end{proof}

We are now ready to complete the proof of Theorem~\ref{thm:high-dimensional}.
Let $\gamma = \frac{\frac23 \delta}{(1 + \delta)}$ and note that $\gamma \ge \delta / 3$ since $\delta \le 1$.
If $p_{13} \ge 1/3 + \gamma$ then the desired guarantee follows immediately.
Otherwise, if $p_{13} \le 1/3 + \gamma$ then we have:
\begin{align*}
\frac{\mathbb E[ALG_{1,2,3}]}{OPT_{1,2,3}} &= \frac{p_{13} w_{13} + p_{12} w_{12} + p_{23}w_{23}}{w_{13}} \\
& \ge \frac13 + \left(\frac23 - \gamma\right)\frac{w_{12}}{w_{13}}  \ge \frac13 + \left(\frac23 - \gamma\right) \delta \\
& = \frac13 + \frac 23 \frac{\delta}{1 + \delta}  \ge \frac{1 + \delta}{3},
\end{align*}
where we used the fact that 
$$\frac{w_{12}}{w_{13}} = e^{\frac{\|v_1 - v_3\|_2^2 - \|v_1 - v_2\|_2^2}{2 \sigma^2}} \ge e^{\frac{- \|v_1 - v_2\|_2^2}{2 \sigma^2}} \ge \delta.$$
\end{proof}

\subsection{Gaussian Kernels with small \texorpdfstring{$\delta$}{} }
  \Cref{thm:high-dimensional} only provides an improved approximation guarantee for \textsc{Projected Random Cut} compared to the factor $1/3$ (\ie the tight approximation guarantees of average-linkage in high dimensions; see \Cref{sec:high-hard}) if $\delta$ is not too small, where $\delta =  \min_{i,j} \exp( - \frac{\|v_i - v_j\|_2^2}{2\sigma^2})$. In particular, we get constant improvement if $\delta =\Omega(1)$. Is this a reasonable assumption? Interestingly, we answer this question in the affirmative by showing that if we have $\delta = \exp(- \tilde \Omega(n))$, then the Gaussian kernel can encode \emph{arbitrary} similarity weights (up to scaling, which has no effects on multiplicative approximations). For simplicity, we only prove this result for $\{\epsilon,1\}$ weights here, while it can be generalized to arbitrary weights.
\begin{theorem}
Given any undirected graph $G=(V,E)$ on $n$ nodes and $\epsilon>0$, there exist unit vectors $\{k_v\}_{v\in V}$ in $\mathbb{R}^d$ and bandwidth parameter $\sigma\in\mathbb{R}_+$, such that  $d=O(n^2)$, $\frac{1}{\sigma^2}=\Omega(n\log(1/\eps))$, and for some $\alpha>0$ we have:
\begin{align*}
&\forall (u,v)\in E: e^{-\frac{\lVert k_u-k_v\rVert_2^2}{2\sigma^2}}=\alpha,\\
&\forall (u,v)\notin E: e^{-\frac{\lVert k_u-k_v\rVert_2^2}{2\sigma^2}}=\alpha\epsilon.
\end{align*}
\end{theorem}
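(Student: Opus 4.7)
The plan is to reduce the theorem to constructing a positive semidefinite Gram matrix and then to verify the PSD condition via an elementary eigenvalue bound. Since the $k_v$'s are required to be unit vectors, $\|k_u-k_v\|_2^2 = 2 - 2\langle k_u, k_v\rangle$, so the two kernel identities are equivalent to prescribing
\[
\langle k_u,k_v\rangle = c_1 \text{ for } (u,v)\in E,\qquad \langle k_u,k_v\rangle = c_2 \text{ for } \{u,v\}\notin E,\ u\ne v,
\]
for constants $c_1 > c_2$. Taking the ratio of the two exponentials forces $c_1-c_2 = \sigma^2\log(1/\epsilon)$, so picking $\sigma^2 = \Theta\!\left(1/(n\log(1/\epsilon))\right)$---exactly the bandwidth regime $1/\sigma^2=\Omega(n\log(1/\epsilon))$ stipulated by the theorem---gives $c_1-c_2=\Theta(1/n)$. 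I will fix $c_1=1/2$ for concreteness, so $c_2=1/2-\Theta(1/n)$.

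Next I would consider the symmetric matrix $M\in\mathbb{R}^{V\times V}$ with $M_{vv}=1$ and, for $u\ne v$, $M_{uv}=c_1\ind{(u,v)\in E}+c_2\ind{(u,v)\notin E}$. Writing $A_G$ for the adjacency matrix of $G$ and $J$ for the all-ones matrix, this decomposes as
\[
M \;=\; c_2\, J \;+\; (1-c_2)\, I \;+\; (c_1-c_2)\, A_G.
\]
The summand $c_2 J$ is PSD. The remaining matrix has eigenvalues of the form $(1-c_2)+(c_1-c_2)\mu$ as $\mu$ ranges over the spectrum of $A_G$; using the crude bound $\|A_G\|_{\mathrm{op}}\le n$, requiring $(c_1-c_2)\le (1-c_2)/n$---automatic once the constant in $\Theta(1/n)$ is chosen sufficiently small, which in turn is exactly the $1/\sigma^2=\Omega(n\log(1/\epsilon))$ assumption with a large enough hidden constant---makes every such eigenvalue nonnegative, so $M\succeq 0$.

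Finally, a Cholesky (equivalently spectral) factorization $M=KK^\top$ with $K\in\mathbb{R}^{V\times d}$ and $d=\mathrm{rank}(M)\le n=O(n^2)$ realizes $M$ as a Gram matrix: taking $k_v$ to be the $v$-th row of $K$ yields unit vectors in $\mathbb{R}^d$ with precisely the prescribed inner products. Substituting back gives $\|k_u-k_v\|_2^2=2(1-c_1)$ on edges and $2(1-c_2)$ on non-edges, so the Gaussian kernel takes exactly the two desired values with $\alpha=e^{-(1-c_1)/\sigma^2}$ on edges and $\alpha\epsilon$ on non-edges. The main technical point, and the one I expect to be the principal obstacle, is the PSD verification for $M$; the plan above settles it through the standard operator-norm bound on $A_G$, and any tighter spectral bound on $A_G$ (for instance, via maximum degree) would translate directly into a sharper sufficient condition on the bandwidth.
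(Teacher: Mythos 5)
Your proof is correct, but it takes a genuinely different route from the paper's. The paper argues by explicit construction: it assigns a family of mutually orthogonal vectors $x_e$ indexed by the edges of $G$ (hence $d=\binom{n}{2}$), forms $y_v=\sum_{(u,v)\in E} d_u d_v\, x_{uv}$ so that $\langle y_u,y_v\rangle$ equals $1$ or $0$ according to adjacency, and then mixes in an orthonormal family $z_v$ to normalize, arriving at $\langle k_u,k_v\rangle=1/n$ on edges and $0$ on non-edges with $\sigma^2=1/(n\log(1/\epsilon))$. You instead reduce everything to realizing the prescribed Gram matrix $M=c_2 J+(1-c_2)I+(c_1-c_2)A_G$ and certify $M\succeq 0$ via $c_2\ge 0$ and the operator-norm bound on $A_G$ before factoring $M=KK^\top$; I checked your decomposition, the PSD condition $(c_1-c_2)\le(1-c_2)/n$, and the final kernel-ratio computation, and all go through (the paper's choice $c_1=1/n$, $c_2=0$ is in fact a special case of your framework). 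Your route buys a better dimension, $d=\operatorname{rank}(M)\le n$ rather than $O(n^2)$, a sharper sufficient condition on the bandwidth via $\|A_G\|_{\mathrm{op}}\le\Delta_{\max}$, and a painless extension to arbitrary (bounded-away-from-zero) weights by perturbing $c_2 J+(1-c_2)I$, which the paper only asserts without proof; the paper's construction buys explicitness, reading the inner products off directly with no spectral argument. The one caveat common to both proofs is the implicit assumption $\epsilon<1$, needed so that $c_1>c_2$.
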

\begin{proof} Our proof is constructive. Let $d=\binom{n}{2}$. Pick orthogonal vectors $\{x_e\}_{e\in E}$ in $\mathbb{R}^d$ such that $\lVert x_e\rVert_2 =\frac{1}{d_u d_v}$, where $d_u$ is the degree of node $u$ in graph $G$. For each $v\in V$, define $y_v\in \mathbb{R}^d$ as follows:
\[
y_v\triangleq\sum_{(u,v) \in E}(d_ud_v)x_{uv}.
\]
Note that $\lVert y_v \rVert_2^2=\sum_{(u, v) \in E}d_u^2d_v^2\frac{1}{d_u^2 d_v^2}=d_v$
As the next step, pick a set of $n$ orthonormal vectors $\{z_v\}$ in the null space of $\{y_v\}_{v\in V}$. Finally, for each $v\in V$, define the final vector $k_v\in\mathbb{R}^d$ as follows:
\[
k_v\triangleq \sqrt{1-\frac{d_v}{n}}z_v+\sqrt{\frac{1}{{n}}}y_v
\]
First, note that these vectors have unit length:
\[
\lVert k_v \rVert_2^2=1-\frac{d_v}{n}+\frac{\lVert y_v\rVert_2^2}{n}=1-\frac{d_v}{n}+\frac{d_v}{n}=1
\]
Now, pick any two vertices $u$ and $v$. If $(u,v)\notin E$, then:
\begin{align*}
\langle k_u,  k_v \rangle &= \langle \sqrt{1-\frac{d_v}{n}}z_u+\sqrt{\frac{1}{{n}}}y_u, \sqrt{1-\frac{d_v}{n}}z_v+\sqrt{\frac{1}{{n}}}y_v\rangle  \\
&=\frac{1}{n} \langle y_u, y_v \rangle=0~,
\end{align*}
where we used the fact that $\langle z_u, z_v\rangle = 0$ , $\langle z_u, y_v\rangle = \langle z_v, y_u\rangle =0$ and the fact that $\langle x_e,  x_{e'}\rangle =0$ for every edge $e$ incident to $u$ and every edge $e'$ incident to $v$, as $e\neq e'$ when $(u,v) \notin E$. Similarly, when $(u,v)\in E$:
\[
\langle k_u, k_v \rangle =\frac{1}{n} \langle y_u, y_v\rangle=\frac{1}{n}(d_u^2d_v^2\lVert x_{uv}\rVert_2^2)=\frac{1}{n}
\]
Now, consider a Gaussian kernel with bandwidth $\sigma=(n\log(1/\epsilon))^{-\frac{1}{2}}$ and vectors $\{k_v\}_{v\in V}$. Since $\|k_u - k_v\|_2^2 = 2(1 - \langle k_u, k_v \rangle)$ from the above calculations of the inner products it follows that:
\begin{align*}
&\forall (u,v)\in E: e^{-\frac{\lVert k_u-k_v\rVert_2^2}{2\sigma^2}}=e^{\frac{1-1/n}{\sigma^2}},\\
&\forall (u,v)\notin E:e^{-\frac{\lVert k_u-k_v\rVert_2^2}{2\sigma^2}}=e^{\frac{1}{\sigma^2}}.
\end{align*}
Thus the ratio is $e^{-\frac{1}{n\sigma^2}}=\epsilon$, as desired.
\end{proof}

\section{Hard Instances for Average-Linkage under Gaussian Kernel Similarity}\label{sec:hard-instances}

\label{sec:high-hard}
\label{thm:high-hard}
\paragraph{High-dimensional case.}
We embed the construction of~\cite{charikar2018hierarchical} shown in Figure~\ref{fig:clique-construction} into vectors with similarities given by the Gaussian kernel.

\begin{theorem}

There exists a set of vectors $v_1, \dots, v_n \in \mathbb R^d$ for $d = poly(\log n)$ for which the average-linkage clustering algorithm achieves an approximation at most $\frac13 + o(1)$ for $\Jfun$ under the Gaussian kernel similarity measure.
\end{theorem}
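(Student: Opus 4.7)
The plan is to embed the combinatorial hard instance of \cite{charikar2018hierarchical} (depicted in Figure~\ref{fig:clique-construction}) into Euclidean space so that the Gaussian kernel reproduces (approximately) the two similarity levels appearing in that instance, and then compress the dimension to $\mathrm{poly}(\log n)$ via a Johnson--Lindenstrauss-style projection. Since the Charikar et al.\ instance only uses two distinct similarity values --- a ``high'' weight on intra-block pairs and a ``low'' weight on inter-block pairs --- it is enough to realize a single target inner-product gap between the two classes of pairs, and then argue that average-linkage on the Gaussian instance follows the same sequence of bad merges as on the combinatorial instance, up to $o(1)$ loss.

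First, I would build unit vectors $\tilde v_1,\dots,\tilde v_n\in\mathbb R^{D}$ with $D=O(n^2)$ realizing exactly the two target inner products, using the same construction used in Section~4.1 (assigning an orthogonal auxiliary vector $x_e$ to each ``block-edge'' and defining $\tilde v_i$ as a suitable combination of these $x_e$'s plus an orthogonal padding direction $z_i$ so that $\|\tilde v_i\|_2=1$). Choose the bandwidth $\sigma$ so that the resulting Gaussian similarities $e^{-\|\tilde v_i-\tilde v_j\|_2^2/2\sigma^2}$ are in the right ratio to match the weights of the Charikar et al.\ instance. At this stage the hard instance is reproduced exactly, but in dimension $D=O(n^2)$.

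Second, I would apply the Johnson--Lindenstrauss lemma with distortion parameter $\epsilon = 1/\mathrm{polylog}(n)$ to obtain vectors $v_1,\dots,v_n\in\mathbb R^d$ with $d=O(\epsilon^{-2}\log n)=\mathrm{poly}(\log n)$ preserving every squared pairwise distance within a factor $1\pm\epsilon$. Under the Gaussian kernel this translates into a multiplicative $e^{\pm\epsilon\|\tilde v_i-\tilde v_j\|_2^2/2\sigma^2}=1\pm o(1)$ distortion on every similarity, provided $\|\tilde v_i-\tilde v_j\|_2^2/\sigma^2=O(\mathrm{polylog}(n))$, which is arranged by the choice of $\sigma$ in the first step. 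Consequently, at every stage of the average-linkage process the average intra-cluster / inter-cluster similarities on $\{v_i\}$ agree with those on $\{\tilde v_i\}$ up to $(1\pm o(1))$ factors.

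Third, I would argue that this vanishing multiplicative perturbation does not disturb the sequence of merges performed by average-linkage on the original hard instance: in the Charikar et al.\ construction the average similarities that average-linkage compares at each step differ by a constant multiplicative margin, so $(1\pm o(1))$ perturbations do not flip any merge decision. Therefore average-linkage on $\{v_i\}$ outputs a tree of the same combinatorial type as on the original instance, scoring at most $(1+o(1))\cdot\frac13\cdot\mathrm{OPT}$, while the optimum on $\{v_i\}$ is at least $(1-o(1))$ times the optimum on the original instance, giving the claimed $\frac13+o(1)$ bound.

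The main delicacy is the interplay between the choice of $\sigma$ and the JL distortion: one needs $\sigma$ small enough that the similarity gap between the two classes of pairs is non-trivial (so that the Charikar et al.\ obstruction survives), yet not so small that the exponent $\|\tilde v_i-\tilde v_j\|_2^2/\sigma^2$ blows up and turns the $(1\pm\epsilon)$ distance distortion into a super-constant multiplicative perturbation of the similarities. Checking that these two requirements can be simultaneously satisfied with $\epsilon=1/\mathrm{polylog}(n)$ and $d=\mathrm{poly}(\log n)$ is the technical heart of the argument; once this balance is established, the rest reduces to reusing the gap analysis already carried out for the combinatorial hard instance.
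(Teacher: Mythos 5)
Your high-level strategy coincides with the paper's: realize the \cite{charikar2018hierarchical} instance as a Gaussian-kernel instance in moderate dimension, then compress with Johnson--Lindenstrauss. But your Step 1 contains a concrete flaw that breaks the dimension bound. You propose to reuse the two-level kernel embedding from the subsection on Gaussian kernels with small $\delta$ (orthogonal vectors $x_e$ per edge, padding directions $z_v$, unit vectors $k_v$). That construction produces an inner-product gap of only $1/n$ between ``adjacent'' and ``non-adjacent'' pairs, so the squared-distance gap is $2/n$; to convert this into a non-trivial (constant-factor) gap in the Gaussian similarities one is forced to take $1/\sigma^2 = \Omega(n)$, which makes every exponent $\|\tilde v_i - \tilde v_j\|_2^2/(2\sigma^2) = \Theta(n)$. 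You correctly state in your last paragraph that the JL step needs these exponents to be $O(\mathrm{poly}(\log n))$, but you then assert this ``is arranged by the choice of $\sigma$ in the first step'' --- it is not, and it cannot be with that embedding. With exponents of order $n$, a JL distortion of $1\pm\xi$ on squared distances perturbs each similarity by a factor $e^{\pm\Theta(\xi n)}$, so keeping the similarities within $1\pm o(1)$ forces $\xi = o(1/n)$ and hence $d = \Omega(\xi^{-2}\log n) = \Omega(n^2 \log n)$: no dimension reduction at all. (A secondary issue: that embedding realizes only two similarity levels, while the hard instance needs three --- $1$, $1-\epsilon$, and $\approx 0$.)

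The paper avoids this by using a different, bespoke embedding whose distance gaps are large in absolute terms but small relative to $\sigma$: $v_{i,j} = \Delta\bigl(e_i + (1+\tau)e_{k+j}\bigr)$ with $\Delta^2 = 2\sigma^2 c \log n$ and $\tau = 1/\mathrm{poly}(\log n)$. Then the three squared distances are $2\Delta^2$, $2(1+\tau)^2\Delta^2$, and $\ge 4\Delta^2$: the first two give similarities within a controlled ratio (playing the roles of $1$ and $1-\epsilon$), the third is smaller by a factor $n^{-\Theta(c)}$ and hence negligible, and crucially every exponent is only $O(\log n)$. With exponents of order $\log n$, a JL distortion of $\xi = \tau/10 = 1/\mathrm{poly}(\log n)$ perturbs each similarity by $e^{\pm O(\xi \log n)} = 1 \pm o(1)$, so the target dimension $d = O(\xi^{-2}\log n) = \mathrm{poly}(\log n)$ is achievable. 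To repair your argument you should replace your Step 1 with an embedding of this type; your Steps 2 and 3 (JL, plus the observation that $1\pm o(1)$ multiplicative perturbations do not change the asymptotic gap in the \cite{charikar2018hierarchical} analysis) then go through essentially as in the paper.
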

\begin{proof}[Proof of Theorem~\ref{thm:high-hard}]
We start by this theorem:

\begin{theorem}[\cite{charikar2018hierarchical}]\label{thm:ccn-avg-hardness}
For any constant $\epsilon \in (0,1)$ the instance $\mathcal I$ average-linkage clustering achieves the value of $\Jfun$ at most $\frac16 n^{8/3} + O(n^{7/3})$ while the optimum is at least $\frac12 n^{8/3} - O(n^{7/3})$.
\end{theorem}

\begin{figure}
	\centering
	\usetikzlibrary{shapes.geometric}
	\begin{tikzpicture}[scale=0.9]

	[every node/.style={inner sep=0pt}]
	\foreach \y in {0,2,3} {
	\draw [fill=orange, fill opacity=0.4] (31.25pt, -56.25pt + \y * 2 * 18.75pt) ellipse (120pt and 11pt);	
	\node (\y1) at (-85pt, -46.25pt + \y * 2 * 18.75pt) {\tiny{ $K_{n^{2/3}}$}};	
}
\foreach \x in {0,1,2,6,7,8}{
	\node (\x1) at (-65.5pt +\x*23.44pt, 90pt) {\tiny $K_{n^{1/3}}$};
	\draw [fill=blue, fill opacity=0.4] (-62.5pt +\x*23.44pt,-56.25pt + 3 * 18.75pt) ellipse (11pt and 80pt);	
}
\foreach \x in {0,...,8}{
	\foreach \y in {0,...,3}{
		\node (\x\y)[circle, minimum size=7.5pt, fill=black, line width=0.625pt, draw=black] at (-62.5pt +\x*23.44pt,-56.25pt + \y * 2 * 18.75pt) {};
	}
}
	\draw[thick,<->] (02) -- (03) node[midway,left] {\small$\mathbf{1- \epsilon}$};	
	\draw[thick,<->] (03) -- (13) node[midway,above] {\small$\mathbf 1$};
	\end{tikzpicture}
	\caption{Hard instance $\mathcal I$ from~\cite{charikar2018hierarchical}. Vertices in orange blocks form cliques of size $n^{2/3}$ connected by edges of similarity $1 - \epsilon$, vertices in blue blocks form cliques of size $n^{1/3}$ connected by edges of similarity $1$, all other pairs have similarity $0$.}\label{fig:clique-construction}
\end{figure}
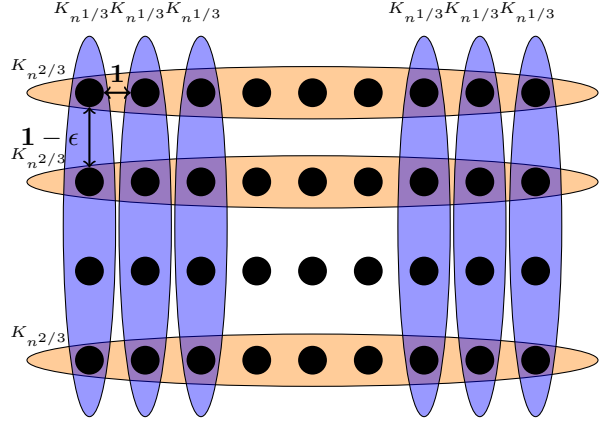

 Let $\Delta, \tau > 0$ be real-valued parameters to be chosen later.
We use indices $i$ and $j$ to index our set of vectors.
For $i \in \{1, 2, \dots, n^{1/3}\}$, $j \in \{1, 2, \dots, n^{2/3}\}$ let
$$v_{i,j} = \Delta (e_i + (1 + \tau)e_{k + j}),$$ 
where $k = n^{1/3}$ and $e_i$ is the $t$-th standard unit vector $e_t = (0 \dots, 1, \dots, 0)$ with the $1$ in the $t$-th entry.
Then it is easy to see that for any fixed $i \in [n^{1/3}]$ and $j_1 \neq j_2 \in [n^{2/3}]$ it holds that:
$$\|v_{i,j_1} - v_{i, j_2}\|_2^2 = 2 (1 + \tau)^2 \Delta^2$$
Similarly, for any fixed $j \in [n^{2/3}]$ and $i_1 \neq i_2 \in [n^{1/3}]$ it holds that:
$$\|v_{i_1, j} - v_{i_2,j}\|_2^2 = 2 \Delta^2.$$
Otherwise if $i_1 \neq i_2 \in [n^{1/3}]$ and $j_1 \neq j_2  \in [n^{2/3}]$ then:
$$\|v_{i_1, j_1} - v_{i_2, j_2}\|_2^2 = 2 \Delta^2 + 2 (1 + \tau)^2 \Delta^2 \ge 4 \Delta^2.$$

By setting $\Delta^2 = 2 \sigma^2 c \log n$ for a sufficiently large constant $c$ the contribution of pairs of vectors with $i_1 \neq i_2$ and $j_1 \neq j_2$ can be made negligible. Let $2 (1 + \tau)^2 \Delta^2 = \alpha$ and $2 \Delta^2 = \beta$.
The rest of the pairs correspond to an hard instance $\mathcal I$ for which average-linkage only achieves a $\frac13 + o(1)$-approximation compared to the optimum.
By setting $\tau = 1 / poly(\log(n))$ we have $e^{(\beta^2 - \alpha^2)/2\sigma^2}  = \Omega(1)$ and hence by Theorem~\ref{thm:ccn-avg-hardness} it follows that average-linkage clustering can't achieve better than $1/3 + o(1)$ approximation for this instance.

Finally, note that by applying the Johnson-Lindenstrauss transform we can reduce the dimension required for the above reduction to $d = poly(\log n)$. Indeed, projecting on a random subspace of dimension $O(\frac{\log n}{\xi^2})$ would preserve $\ell_2^2$-distances between all pairs of vectors up to a multiplicative factor of $(1 \pm \xi)$ setting $\xi = \frac{\tau}{10} = 1/poly(\log n)$ it follows that our hard instance can be embedded in dimension $d = poly(\log n)$.
\end{proof}

%
%

\vspace{-3mm}

\paragraph{Low-dimensional case.}
\label{sec:hard-low-dim}
For $d = 1$ a hard instance for average-linkage clustering can also be constructed.
\begin{lemma}
\label{lemma:four-point}
For points $x_1, \dots, x_n \in \mathbb R$ average-linkage clustering achieves approximation at most $3/4$ for $\Jfun$ under the Gaussian kernel similarity measure.
\end{lemma}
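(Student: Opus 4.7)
The plan is to exhibit a small explicit $1$-dimensional instance on which average-linkage (with an appropriate tie-breaking rule) is forced to build a caterpillar-style tree whose $\Jfun$ value is close to $3/4$ of the value achieved by the balanced tree. Four equally spaced points and a small bandwidth will be essentially tight for the ratio $3/4$.

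Concretely, take $x_i = i\cdot d$ for $i \in \{1,2,3,4\}$ with the Gaussian kernel of bandwidth $\sigma \ll d$, and set $\alpha := e^{-d^2/(2\sigma^2)}$. Then $w_{12}=w_{23}=w_{34}=\alpha$, $w_{13}=w_{24}=\alpha \cdot e^{-3d^2/(2\sigma^2)}$, and $w_{14}=\alpha \cdot e^{-8d^2/(2\sigma^2)}$, so all non-adjacent similarities are vanishingly small relative to $\alpha$ as $d/\sigma \to \infty$. Since the three adjacent weights are tied, AL's first merge can pick any of the pairs $(1,2)$, $(2,3)$, $(3,4)$; breaking the tie adversarially, let AL merge $(2,3)$ first. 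On the resulting clusters $\{1\}, \{2,3\}, \{4\}$, the two average similarities $(w_{12}+w_{13})/2$ and $(w_{24}+w_{34})/2$ are again tied (both dominate $w_{14}$), so let AL next merge $\{1\}$ with $\{2,3\}$. The resulting tree is $T_{AL} = ((1,(2,3)),4)$.

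A direct computation from the $\Jfun$ definition gives $\Jfun(T_{AL}) = 2 w_{23} + w_{12} + w_{13} = 3\alpha + w_{13}$. For a lower bound on $\opt$, simply take the balanced tree $((1,2),(3,4))$, whose value is $2(w_{12}+w_{34}) = 4\alpha$. Therefore
\[
\frac{\Jfun(T_{AL})}{\opt} \;\le\; \frac{3\alpha + w_{13}}{4\alpha} \;=\; \frac{3}{4} + \frac{1}{4}\,e^{-3d^2/(2\sigma^2)}.
\]
Letting $d/\sigma \to \infty$ makes the second term vanish, so the approximation ratio achieved by AL on this family of instances approaches $3/4$, as claimed.

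The main obstacle is the tie-breaking: because all three adjacent gaps are equal, AL has freedom in both merging steps, and only under an adversarial (or at least worst-case) tie-breaking is the bad tree produced. This can be made completely unambiguous by an infinitesimal perturbation of the points, e.g.\ shrinking the middle gap by $\eta_1$ to force the first merge $(2,3)$ deterministically, and then breaking the left-right symmetry of the second step by an even smaller $\eta_2 \ll \eta_1$ on one of the outer gaps, so that both merges are forced and the $\Jfun$ values shift only by $O(\eta_1 + \eta_2)$. Taking $\eta_1, \eta_2 \to 0$ and $\sigma \to 0$ in the appropriate order preserves the ratio bound $3/4 + o(1)$, establishing the lemma.
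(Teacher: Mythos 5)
Your proposal is correct and follows essentially the same route as the paper's proof: four equally spaced points with a small bandwidth, a slight perturbation of the middle points to force average-linkage to merge the middle pair first, and a comparison of the resulting caterpillar tree (value $\approx 3\alpha$) against the balanced tree $((1,2),(3,4))$ (value $4\alpha$). Your explicit computation of $\Jfun$ for both trees and your more careful two-level tie-breaking perturbation only make the argument more precise than the paper's sketch.
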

\begin{proof}
Consider the instance consisting of four equally spaced points on a line, i.e. $0, \Delta, 2\Delta, 3 \Delta$.
Then (after shifting the two middle points slightly) the average-linkage clustering algorithm might first connect the two middle points and then connect the two other points in arbitrary order. We denote the cost of this solution as $AVG$.
An alternative solution would be to create two groups $(0, \Delta)$ and $(2\Delta, 3\Delta)$ and then merge them together.
We denote the cost of this solution as $OPT$.
By making $\Delta$ sufficiently large the contribution of pairs at distance more than $\Delta$ from each other can be ignored. We thus have  $AVG \le (\sqrt{2 \pi} \sigma)^{-n} (3 e^{-\Delta^2/2\sigma^2} + o(e^{-\Delta^2/2\sigma^2}))$ and $OPT \ge  (\sqrt{2 \pi} \sigma)^{-n}  4 e^{- \Delta^2/2\sigma^2}$, which gives the ratio of $3/4$ for sufficiently large $\Delta$.
\end{proof}
\begin{corollary}
In the four-point instance of the proof of \Cref{lemma:four-point}, the $\Dsbench$ evaluates to \\
$(\sqrt{2 \pi} \sigma)^{-n} 6 e^{- \Delta^2/2\sigma^2}$, and hence average-linkage cannot obtain more than $1/2$ of $\Dsbench$.
\end{corollary}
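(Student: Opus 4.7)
The plan is to simply compute $\Dsbench$ explicitly on the four-point instance $\{0,\Delta,2\Delta,3\Delta\}$ and then combine with the upper bound on $AVG$ already derived in the proof of \Cref{lemma:four-point}. Label the points $x_1<x_2<x_3<x_4$. Writing $C := (\sqrt{2\pi}\sigma)^{-n}$, the six pairwise weights are $w_{12}=w_{23}=w_{34}=C e^{-\Delta^2/2\sigma^2}$, $w_{13}=w_{24}=C e^{-2\Delta^2/\sigma^2}$, and $w_{14}=C e^{-9\Delta^2/2\sigma^2}$. For any triple $i<j<k$ only the two ``consecutive-gap'' weights $w_{ij},w_{jk}$ appear in $\Dsbench$, so I will just enumerate the four triples.

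Next I would list the contributions: triple $(1,2,3)$ contributes $w_{12}+w_{23}=2Ce^{-\Delta^2/2\sigma^2}$; triple $(2,3,4)$ contributes $w_{23}+w_{34}=2Ce^{-\Delta^2/2\sigma^2}$; triple $(1,2,4)$ contributes $w_{12}+w_{24}=Ce^{-\Delta^2/2\sigma^2}+Ce^{-2\Delta^2/\sigma^2}$; and triple $(1,3,4)$ contributes $w_{13}+w_{34}=Ce^{-2\Delta^2/\sigma^2}+Ce^{-\Delta^2/2\sigma^2}$. Summing these four lines and grouping the leading-order ($e^{-\Delta^2/2\sigma^2}$) and higher-order ($e^{-2\Delta^2/\sigma^2}$) terms gives
\[
\Dsbench \;=\; C\bigl(6 e^{-\Delta^2/2\sigma^2} + 2 e^{-2\Delta^2/\sigma^2}\bigr) \;=\; C\bigl(6 e^{-\Delta^2/2\sigma^2} + o(e^{-\Delta^2/2\sigma^2})\bigr),
\]
where the last equality uses that $\Delta$ is taken large (as in the proof of \Cref{lemma:four-point}), so $e^{-2\Delta^2/\sigma^2}=o(e^{-\Delta^2/2\sigma^2})$.

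Finally, \Cref{lemma:four-point} already established $AVG\le C\bigl(3e^{-\Delta^2/2\sigma^2}+o(e^{-\Delta^2/2\sigma^2})\bigr)$. Dividing yields
\[
\frac{AVG}{\Dsbench}\;\le\;\frac{3+o(1)}{6+o(1)}\;\xrightarrow[\Delta\to\infty]{}\;\frac12,
\]
proving the corollary. There is no real obstacle here: the only bookkeeping is to check that, among the six pairs, exactly six of the $(w_{ij}+w_{jk})$ slots across the four triples are filled by a ``gap-$\Delta$'' weight (three such weights, each counted twice, once as a left and once as a right consecutive pair), while all other slots are of strictly smaller order. This confirms that $\Dsbench$ is tight to within a factor of $2$ of the algorithm's value on this instance, and hence the $\tfrac12$-approximation of \Cref{thm:average-linkage} against $\Dsbench$ cannot be improved.
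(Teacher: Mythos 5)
Your proposal is correct and is essentially the computation the paper intends (the corollary is stated without an explicit proof, and direct enumeration of the four triples is the only natural route): the four triples contribute $2w_{12}+2w_{23}+2w_{34}+w_{13}+w_{24}=6Ce^{-\Delta^2/2\sigma^2}+o(e^{-\Delta^2/2\sigma^2})$, and combining with the bound $AVG\le C(3e^{-\Delta^2/2\sigma^2}+o(e^{-\Delta^2/2\sigma^2}))$ from \Cref{lemma:four-point} gives the ratio $1/2$. Your bookkeeping of which weights land in which slots is accurate, and your handling of the lower-order terms matches the paper's convention of taking $\Delta$ large.
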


\section{Experimental Results}\label{sec:experiments}

\label{sec:experiment}
In this section we demonstrate the quality of the solution returned by \textsc{Projected Random Cut} (\textsc{PRC}) on a small dataset, and highlight that running time only scales linearly on a large dataset.
\textsc{PRC} does not compute the similarity weights (\ie it only takes one pass over the feature vectors with dimension-free memory requirements) and then sorts the projected points in time $O(n\log n)$. Hence, it is fast even in use-cases with over millions of datapoints (in contrast to average-linkage, spectral clustering, or even single-linkage which all have superlinear running times in high dimensions).

We run \textsc{PRC} algorithm on two real datasets from the UCI ML repository \citep{Zoo}: (\romannum{1}) the Zoo dataset \citep{Zoo,dasguptaICML} (the \emph{small} dataset) that contains 100 animals given as 16D feature vectors (this dataset comes from applications in biology), and (\romannum{2}) the SIFT10M dataset \citep{fu2014nokmeans} (the \emph{large} dataset) that contains around 10M datapoints, where each datapoint is a 128D Scale Invariant Feature Transform (SIFT) vector (this dataset comes from applications in computer vision).  For more details, refer to the Appendix~\ref{sec:appendix-experiment}.

\paragraph{Small dataset (Zoo).} We compare \textsc{PRC} to (\romannum{1}) the recursive spectral clustering using the second eigenvector of the normalized Laplacian of the weight matrix (\textsc{Spectral})~\citep{vaggosICML}, (\romannum{2}) average-linkage (\textsc{AL}), and (\romannum{3}) $\mbench$, an upper-bound on the objective value of the optimum tree; see \Cref{sec:prelim}. In contrast to \textsc{PRC}, these three benchmarks need to compute the weights and are slow on large datasets. 

 \Cref{tab:exp} summarizes the results of this experiment for various choices of the parameter $\sigma$ (first column). The second, third and fourth columns report the objective values (\ie $\Jfun$) of the trees returned by $\textsc{PRC}$, $\textsc{Spectral}$ and $\textsc{AL}$ respectively, and the fifth column gives an empirically observed approximation guarantee for PRC by comparing it against the upper bound on optimum $\mbench$. 
 We observe that $\textsc{PRC}$ attains high approximation factors of $\approx [0.73,\ldots,0.92]$ compared to $\mbench$ . Moreover, \textsc{PRC} obtains competitive objective values compared to $\textsc{Spectral}$ and $\textsc{AL}$, although it runs faster (in almost linear-time).

\begin{table}
\small
\begin{center}
 \begin{tabular}{|c||c|c|c|c|c|} 
  \hline
 \small$\sigma$ & \small$\textsc{PRC}$ & \small$\textsc{Spectral}$ & \small$\textsc{AL}$ & \small$\mbench$ & \small$\tfrac{\textsc{PRC}}{\mbench}$ \\ [0.4ex] 
 \hline\hline
 1.5 & 48  & 61 & 28 & 64  & 0.75\\ 
 \hline
 2 & 64  & 83 & 47 & 87  & 0.74\\ 
 \hline
 2.5 & 83  & 100 & 66 & 105  & 0.79\\ 
 \hline
 3 & 100 & 112 & 82 & 117 &0.85\\
 \hline
 3.5 & 111  & 121 & 95& 126 &0.87\\
 \hline
 4 & 117 & 128 & 105 & 132 &0.88\\
 \hline
 4.5 & 123  & 133 & 114 & 137  & 0.91\\ 
 \hline
 5 & 129  & 137 & 120 & 140  & 0.92\\ 
 \hline
\end{tabular}
\end{center}
\caption{Values of the objective (times $10^{-3}$) on the Zoo dataset (averaged over 10 runs).}\label{tab:exp}
\end{table}

On the choice of bandwidth parameter $\sigma$, as a rule of thumb, we find an interval $[\alpha,\beta]$ such that (\romannum{1}) if $\sigma>\beta$, a considerable portion of pairs of datapoints with very different weights under the cosine similarities~\citep{steinbach2000} have almost equal weights under the Gaussian kernel, and (\romannum{2}) if $\sigma<\alpha$,  a considerable portion of pairs of datapoints with almost equal weights under the cosine similarities have very different weights under the Gaussian kernel. We end up with $\alpha=1.5,\beta=5$.

\paragraph{Large dataset (SIFT10M)} The focus of this experiment is measuring the running time of \textsc{PRC}, and showing that it only scales linearly with the dataset size. Note that evaluating the performance of any other algorithm or upper bound (or even one pass over the similarity matrix) would be prohibitive.

We run $\textsc{PRC}$ on truncated versions of SIFT10M of sizes 10K, 100K, 500K, 1M and 10M. $\sigma$ is set to 450~\footnote{Note that the $\sigma$ value does not affect the running time.}. We emphasize that our \textsc{PRC} algorithm runs extremely fast on a 2014 MacBook\footnote{System specs: 8 GB 1600 MHz DDR3 RAM, 2,5 GHz Intel Core i5 CPU.}. The running times are summarized in \Cref{tab:times}. Observe that $\textsc{PRC}$ scales almost linearly with the data and has almost the same running time as just a single pass over the datapoints.

\begin{table}
\small
\begin{center}
 \begin{tabular}{||c||c||c||} 
  \hline
 \small Size & \small \textsc{PRC} (seconds) & \small 1 Data Pass (seconds)  \\ [0.5ex] 
 \hline\hline
 10K & 1.7 & 1.5\\ 
 \hline
 100K & 13 & 9.6 \\
 \hline
 500K & 67 & 46.4 \\
 \hline
 1M & 135 & 99.7\\
 \hline
10M & 1592 & 1144\\ 
\hline
\end{tabular}
\end{center}
\caption{Running times of \textsc{PRC} and one pass.}\label{tab:times}
\end{table}

%
%


\bibliographystyle{plainnat}
\bibliography{refs.bib}

\appendix


\section{Greedy Cutting and Single-linkage.}\label{app:slc-greedy}
Consider a simple algorithm, denoted by \textsc{Greedy Cut}, that picks the interval  with maximum length among $\{[x_i,x_i+1]\}_{i=1:n-1}$ (lets say $[x_m,x_{m+1}]$), and repeats the same operation recursively on $(x_1,\ldots,x_m)$ and $(x_{m+1},\ldots,x_n)$ until the leaves are reached. 
\begin{lemma}
For $d=1$ and under any monotone distance-based similarity measure $\dfun$ \textsc{Greedy Cut} and single-linkage return the same HC tree. Moreover, the edges picked by \textsc{Greedy Cut} are exactly the same edges picked by single-linkage, picked in reverse order. 
\end{lemma}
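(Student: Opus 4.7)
The plan is to identify both algorithms with the Cartesian tree construction on the sequence of consecutive gaps $g_i \triangleq x_{i+1} - x_i$ for $i = 1, \ldots, n-1$, then read off the reverse-order claim from the fact that one algorithm is bottom-up and the other is top-down.

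First I would establish that both algorithms only ever produce clusters that are contiguous intervals of the sorted points $x_1 \le \cdots \le x_n$. For \textsc{Greedy Cut} this is immediate by induction, since cutting a contiguous range at one gap yields two contiguous ranges. For single-linkage it follows from essentially the same argument as \Cref{lemma:AL-neighbour}: if $C_i, C_j$ are two non-neighboring clusters in the current left-to-right order, then for any $(x,z) \in C_i \times C_j$ and any $y$ in an intermediate cluster $C_k$ we have $\|x-z\| \ge \|x-y\|$ and $\|x-z\| \ge \|y-z\|$, so by monotonicity of $\dfun$ the best similarity across $(C_i,C_j)$ is dominated by the best similarity across $(C_i,C_k)$ or $(C_k,C_j)$; hence single-linkage always merges adjacent clusters. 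Given contiguity, the maximum similarity between two adjacent clusters separated at position $j$ (i.e., right boundary $x_j$, left boundary $x_{j+1}$) is exactly $\dfun(g_j)$, by monotonicity.

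Next I would observe the two complementary ``processing orders'' on the gap sequence. Single-linkage, at every step, picks the adjacent pair maximizing similarity, i.e., the pair whose boundary gap $g_j$ is currently smallest among the still-unmerged gaps; consequently it processes the $n-1$ gaps in increasing order of length, each merge consuming exactly one gap. \textsc{Greedy Cut}, by its very definition, processes the gaps in decreasing order of length, each cut consuming exactly one gap. Thus the two algorithms traverse the same list of $n-1$ gaps in opposite orders, so the ``reverse order'' half of the claim is immediate once we show they build the same tree.

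For the tree-equality half I would give a uniform characterization of the internal node associated with gap $g_j$ in each algorithm. For each $j$, define $[a_j, b_j]$ to be the maximal contiguous index range containing $\{j, j+1\}$ such that $g_i \le g_j$ for all $a_j \le i \le b_j - 1$; equivalently, $a_j - 1$ (resp.\ $b_j$) is the nearest index to the left of $j$ (resp.\ right of $j$) whose gap strictly exceeds $g_j$, or the boundary if no such index exists. I would then verify: (i) in \textsc{Greedy Cut}, just before processing $g_j$, every strictly larger gap to the left or right of $j$ has already been cut, so the active range containing $j$ is exactly $[a_j, b_j]$, and the cut splits it into $[a_j, j]$ and $[j+1, b_j]$; (ii) in single-linkage, just before processing $g_j$, every strictly smaller gap within $[a_j,b_j]$ has already been merged, so the two clusters adjacent at position $j$ are exactly $[a_j, j]$ and $[j+1, b_j]$, whose merge yields $[a_j, b_j]$. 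Thus the internal node indexed by gap $g_j$ is the same node of the same tree in both algorithms, and the tree itself is the Cartesian tree of the gap sequence (with distinct gaps assumed for cleanliness; ties can be broken consistently by index).

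The main obstacle is carefully handling the inductive bookkeeping in step (ii): one needs to confirm that at the moment single-linkage is about to process $g_j$, the connected components on either side have grown to exactly $[a_j,j]$ and $[j+1,b_j]$, no more and no less. This amounts to showing that a smaller-gap merge never ``leaks'' across a larger gap, which follows because merges proceed strictly in increasing gap order, but requires a clean statement so that it matches the \textsc{Greedy Cut} characterization gap-for-gap. Once that matching is nailed down, both conclusions of the lemma fall out at once.
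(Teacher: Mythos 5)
Your proof is correct and follows the same skeleton as the paper's: both arguments reduce to showing that single-linkage consumes the $n-1$ consecutive gaps in increasing order of length while \textsc{Greedy Cut} consumes them in decreasing order, and then conclude the two trees coincide. The differences are in how the two key steps are justified. The paper gets the increasing-order claim by observing that single-linkage is Kruskal's algorithm and that the path $x_1\mbox{--}\cdots\mbox{--}x_n$ is the unique maximum spanning tree; you instead argue directly that single-linkage always merges adjacent clusters (mirroring \Cref{lemma:AL-neighbour}), which is equivalent but self-contained. More substantively, the paper's final step --- ``because single-linkage merges the edges picked by \textsc{Greedy Cut} in the reverse ordering \ldots it returns the same HC tree'' --- is asserted without proof, whereas you actually establish it by identifying both trees with the Cartesian tree of the gap sequence: the internal node for gap $g_j$ is the maximal interval $[a_j,b_j]$ delimited by the nearest strictly larger gaps on each side, in both algorithms. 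That characterization is exactly the detail the paper leaves implicit, so your write-up is the more rigorous of the two; the only loose ends you share with the paper are tie-breaking among equal gaps (which you flag) and the mild convention that \textsc{Greedy Cut}'s recursive calls are serialized in globally decreasing gap order so that the ``reverse order'' statement about edges is literal.
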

\begin{proof}
It is known that single-linkage is essentially the Kruskal algorithm (and hence edges picked by single-linkage form a Maximum Spanning Tree (MST)). Clearly, for any monotone distance-based measure the line connecting $x_1$ to $x_n$ is the unique MST (as any tree can be shortcut-ed with this line), and hence single-linkage picks intervals $\{[x_i,x_{i+1}]\}_{i=1:n-1}$ in increasing order of their lengths. At the same time, \textsc{Greedy Cut} picks also the same intervals, but in decreasing order of their length. Moreover, because single-linkage merges the edges picked by \textsc{Greedy Cut} in the reverse ordering (and it creates the HC tree from bottom to top), it returns the same HC tree as \textsc{Greedy Cut}.
\end{proof}

\begin{remark} As a simple observation, \textsc{Greedy Cut} is equivalent to \emph{reverse-Kruskal} in 1D; It starts from all edges,  goes over them in increasing order of weights (here, in decreasing order of lengths), and only keeps an edge when its removal makes the graph disconnected. We claim the first edge picked by reverse-kurskal corresponds to the interval $[x_m,x_{m+1}]$ with the maximum length, and hence the equivalence between the two algorithms by induction. To prove the claim, the line between $x_1$ and $x_n$ keeps the graph connected, so the first picked edge corresponds to an interval $(x_i,x_{i+1})$. Also, it should be the interval $[x_m,x_{m+1}]$ with maximum length. Moreover, removing this edge makes the graph disconnected, because otherwise there is was cross edge between the left-side $(x_1,...,x_m)$ and the right-side $(x_{m+1},\ldots,x_n)$. The length of such an edge is more than the length of $(x_m,x_{m+1})$, so it should have been removed before, a contradiction.
\end{remark}

We finish the section by demonstrating the lack of performance of single-linkage (and hence \textsc{Greedy Cut}) through an example, which justifies using both average-linkage and \textsc{Random Cut} as \emph{smooth} versions of single-linkage for 1D data points.
\begin{lemma} For $d=1$, single-linkage can obtain at most $\tfrac{1}{2}$ of the optimum objective, and $\tfrac{1}{2}$ of the objective values of average-linkage and \textsc{Random Cut}, under Gaussian kernels.
\end{lemma}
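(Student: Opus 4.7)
The plan is to construct a 1D instance on $n = 2^h$ points under the Gaussian kernel where SL produces a right-caterpillar while OPT, AL, and RC all produce essentially balanced trees, making SL's objective asymptotically half of each. Take $x_i = iD + (n-1-i)\delta$ for $i=1,\dots,n$ with $\delta \ll D$ and $D/\sigma \to \infty$, and set $w_1 := e^{-D^2/(2\sigma^2)}$. The leftward-decreasing perturbation makes the gaps $g_i = x_{i+1}-x_i$ strictly decreasing in $i$ and the adjacent similarities $w_{i,i+1}$ strictly increasing in $i$, while keeping $w_{i,i+1} = (1+o(1))\, w_1$ and all non-adjacent similarities at $O(w_1^4) = o(w_1/n^2)$; thus for every tree the $\Jfun$ value is dominated by adjacent-pair contributions.

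For SL, the earlier 1D equivalence with \textsc{Greedy Cut} combined with strictly decreasing gaps means SL always cuts the current leftmost gap, producing a right-caterpillar with $|T(x_i,x_j)| = n - i + 1$ for $i<j$; hence $\Jfun(\mathcal T_{\text{SL}}) = w_1 \sum_{i=1}^{n-1}(i-1) + o(n^2 w_1) = (\tfrac12 + o(1))\, n^2 w_1$. For OPT, lower bound by the balanced tree $\mathcal T_{\text{bal}}$: at level $\ell\in\{1,\dots,h\}$ there are $2^{\ell-1}$ adjacent pairs of LCA subtree size $n/2^{\ell-1}$, giving $\Jfun(\mathcal T_{\text{bal}}) = w_1 \sum_{\ell=1}^{h} 2^{\ell-1}(n - n/2^{\ell-1}) = w_1 n(n - 1 - \log_2 n) = (1-o(1))\, n^2 w_1$. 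For AL, show by induction on the merge level that the perturbation forces AL to produce $\mathcal T_{\text{bal}}$: at level $k$, merging two sibling $2^k$-clusters $A,B$ has average similarity $w_{A,B}/(|A||B|) \approx w_1/4^k$ (one adjacent cross-edge of weight $w_1$ divided by $4^k$), strictly greater than the $\lesssim w_1/(2\cdot 4^k)$ average of any merge across non-adjacent or size-mismatched cluster pairs, and ties are broken deterministically rightmost-first by the perturbation. For RC, estimate the expected LCA subtree size $f(n)$ of the leftmost adjacent pair: conditioning on which gap contains the uniform cut yields $(n-1)\, f(n) = n + \sum_{m=2}^{n-1} f(m)$, and subtracting the $n+1$-shifted version telescopes to $f(n+1) - f(n) = 1/n$, giving $f(n) = 1 + H_{n-1} = O(\log n)$; an analogous two-sided recursion bounds $\E[|T_{\text{RC}}(x_i,x_{i+1})|] = O(\log n)$ uniformly in $i$, so $\E[\Jfun(\mathcal T_{\text{RC}})] \ge w_1 \sum_{i=1}^{n-1}(n - O(\log n)) = (1-o(1))\, n^2 w_1$.

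Combining these four estimates gives $\Jfun(\mathcal T_{\text{SL}}) \le (\tfrac12 + o(1))\, X$ for each $X \in \{\opt, \Jfun(\mathcal T_{\text{AL}}), \E[\Jfun(\mathcal T_{\text{RC}})]\}$, which is the claim. I expect the main obstacle to be the AL step: the generic bounds $\Jfun(\mathcal T_{\text{AL}}) \ge \tfrac13 \opt$ and $\Jfun(\mathcal T_{\text{AL}}) \ge \tfrac12 \sbench$ give only $\Jfun(\mathcal T_{\text{AL}}) \gtrsim \tfrac12 n^2 w_1$, which merely matches SL, so one must directly argue near-balancedness by a level-by-level induction that verifies adjacent sibling-cluster merges strictly dominate every non-adjacent or cross-level alternative under the Gaussian kernel. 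A secondary difficulty is upgrading the clean leftmost-pair recurrence $f(n+1) - f(n) = 1/n$ to a uniform $O(\log n)$ bound at all interior indices $i$, which requires a less symmetric two-sided recursion rather than the one-sided telescoping available at the boundary.
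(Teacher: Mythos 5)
Your construction and argument are essentially the same as the paper's: equally spaced points under the Gaussian kernel with a monotone perturbation of the gaps, so that single-linkage (via its equivalence with \textsc{Greedy Cut}) produces a caterpillar worth $(\tfrac12+o(1))n^2w_1$, while average-linkage and \textsc{Random Cut} produce (near-)balanced trees worth $(1-o(1))n^2w_1$; the paper states the AL and RC estimates without proof, and your level-by-level induction for AL and the harmonic recursion $f(n+1)-f(n)=1/n$ for RC are correct ways to fill in those details. One concrete slip: the formula $x_i = iD + (n-1-i)\delta$ gives \emph{constant} gaps $g_i = D-\delta$, not strictly decreasing ones, so as written it does not deliver the tie-breaking your SL and AL steps rely on; replace it with a perturbation whose increments decrease, e.g.\ $x_i = iD - \binom{i}{2}\delta$ (equivalently, the paper's prescription $w_{i,i+1} = (1-(i-1)\epsilon)\,e^{-\Delta^2/2\sigma^2}$), after which the rest of your argument goes through.
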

\begin{proof}
Consider an example with $n$ equally spaced points on a line, where the distance between any two adjacent node is $\Delta$. We now slightly move the points so that 
weight of $(x_i,x_{i+1})$ is $(\sqrt{2 \pi} \sigma)^{-n} (1-(i-1)\epsilon) e^{- \Delta^2/2\sigma^2}$ for $i=1:n-1$. Now, single-linkage peels off points in the order $x_1,\ldots,x_n$. Roughly speaking, we let $\Delta/\sigma^2$ to be large enough so we can ignore the similarity weights between any two non-adjacent points. Hence single-linkage gets an objective value of $\approx \left(\sum_{i=1}^{n-1} (n-i)\right)(\sqrt{2 \pi} \sigma)^{-n} e^{- \Delta^2/2\sigma^2}$, which evaluates to $\left(n^2/2+o(n^2)\right)(\sqrt{2 \pi} \sigma)^{-n} e^{- \Delta^2/2\sigma^2}$. Average-linkage returns the symmetric binary tree (and hence the objective value is $$\left((n-2)(n-1)-n\log(n)\right)(\sqrt{2 \pi} \sigma)^{-n} e^{- \Delta^2/2\sigma^2}=\left(n^2+o(n^2)\right)(\sqrt{2 \pi} \sigma)^{-n} e^{- \Delta^2/2\sigma^2}$$. Random returns a random binary tree, with (roughly speaking) a similar objective value of $$\left(n^2+o(n^2)\right)(\sqrt{2 \pi} \sigma)^{-n} e^{- \Delta^2/2\sigma^2}$$ in expectation. The fact that optimum objective value is as large as these two quantities finishes the proof.
\end{proof}

\section{Deferred Discussions in \texorpdfstring{\Cref{sec:experiment}}{}}
\label{sec:appendix-experiment}
\paragraph{Datasets.}
We use real data in our experiments. To be able to measure the performance of produced Hierarchical Clusterings, we need to compute \mbench, and hence we use a small dataset. To be able to show linear scaling of running time even for tens of millions of datapoints, we will use large datasets of sizes 10K to 10M.

\emph{The small dataset:} The Zoo dataset contains 100 animals given as 16-dimensional vectors, forming 7 different classes (\eg mammals, amphibians, etc.). The features contain information about the animal's characteristics (\eg if it has tail, hair, fins, etc.). Here, we want to showcase the quality of the solution produced by \textsc{PRC}; in the case of the small dataset, we can afford keeping track of two benchmarks: the performance of the widely-used \textsc{Spectral} algorithm and the \mbench \footnote{$\mbench$ is based on triples of datapoints, thus taking $O(n^3)$ time.}. 

\emph{The large dataset:} In the SIFT10M dataset, each data point is a SIFT feature, extracted from Caltech-256~\cite{griffin2007caltech} by the open source VLFeat library~\cite{vedaldi2010vlfeat}.
Caltech-256 is used as a computer vision benchmark image data set, that features a total of 256 different classes with high intra-class variations for each category. Each datapoint is a 128-dimensional vector and similar to the Zoo dataset, we use this information as features to perform fast Hierarchical Clustering. Here, we want to test the scalability of our algorithm so we use successively larger datasets from the SIFT10M dataset of 10K, 100K, 500K, 1M and finally 10M datapoints by truncating the data file as necessary.

\end{document}